\documentclass[preprint,12pt,times]{elsarticle}

\usepackage{amssymb}
\usepackage{amsmath}
\usepackage{booktabs}
\usepackage{multirow}
\usepackage{graphicx}
\usepackage{bm}
\usepackage[colorlinks=true, citecolor=blue]{hyperref}
\usepackage{float}
\usepackage{amsmath,amssymb,bm}
\usepackage[colorlinks=true,citecolor=blue]{hyperref}
\usepackage{graphicx}

\usepackage{subcaption}
\usepackage{caption} 
\usepackage{amsthm}

\newtheorem{theorem}{Theorem}[section]

\newtheorem{lemma}[theorem]{Lemma}
\newtheorem{proposition}[theorem]{Proposition}

\theoremstyle{definition}

\theoremstyle{remark}
\newtheorem{remark}[theorem]{Remark}
\floatstyle{ruled}
\newfloat{algorithm}{tbp}{loa}
\floatname{algorithm}{Algorithm}

\begin{document}
	
	\begin{frontmatter}
		
		\title{A finite-precision Lanczos-Golub-Welsch route to probability-table construction in resonance self-shielding} 
		
		\author{Beichen Zheng}
		
		\affiliation{organization={China Nuclear Data Center, China Institute of Atomic Energy}, Department and Organization
			city={Beijing},
			country={China}}
		
\begin{abstract}
This work reformulates Chiba's affine-order prescription as a polynomial-moment problem for a transformed positive measure, and develops an alternative finite-precision construction route based on this reformulation. The proposed construction proceeds through
discrete-measure realization, symmetric Lanczos reduction, and Golub--Welsch extraction, replacing the conventional moment--Pad\'e pipeline. The subgroup total levels and probabilities are obtained by a Gauss-type compression step that preserves nonnegative-realness, while the reaction-channel levels are recovered on the compressed nodes by orthogonal-basis matching. In five tested resonance-channel cases, the proposed construction
yields lower effective-cross-section errors and avoids the
order-induced emergence of complex responses
observed in the conventional construction.
\end{abstract}

		\begin{keyword}
    resonance self-shielding  \sep positive discrete measure \sep Gauss quadrature 
    \end{keyword}
		
	\end{frontmatter}
	
\section{Introduction}
\label{sec:introduction}

Constructing a low-cardinality discrete approximation of a positive measure
while preserving response-relevant information is a recurring task in
numerical analysis and scientific computing. Such constructions may be
interpreted as quadrature rules, in which a finely resolved distribution is
replaced by a small set of representative support points with associated
nonnegative weights. Probability-table construction in resonance
self-shielding provides a practically important instance of this general
problem.

In multigroup reactor calculations, resonance-energy cross sections vary
sharply within a single energy group. Strong absorption in resonance ranges
produces pronounced neutron flux depression, so that the effective group
cross sections depend sensitively on the local neutron spectrum and material
environment. One therefore seeks a reduced within-group representation that
remains accurate for the relevant responses while keeping the computational
cost under control. The subgroup method and related probability-table constructions
\cite{ChibaUnesaki2006,HebertCoste2002MomentTables,Hebert2005RibonExtended,Levitt1972ProbabilityTable},
together with subsequent methodological extensions
\cite{Hebert2009SubgroupProjection,LiZhangZhangZhao2020FineGroupSlowingDown,LiZhangLiuZhangHaoWangJiangLiu2023FineMeshSubgroup,RosierMaoZmijarevicLealSanchez2022PhysicalPT}
and application-oriented developments
\cite{StimpsonLiuCollinsClarno2017MPACT,LiuHeZuCaoWuZhang2018PRNSMGlobalLocal,LiuHeWenZuCaoWu2018PRNSMNECPX,YinZhangLiuHeZhang2025VITAS}
have been widely studied to address this need by replacing the
continuous within-group variation with a finite set of representative levels.
In an \(N\)-level probability table, each resonance group is represented by
subgroup levels, associated subgroup probabilities, and reaction-channel
levels. The effective cross section is then obtained as the
probability-weighted combination of the subgroup responses.

From a numerical perspective, this construction is naturally interpreted as a
quadrature problem in cross-section space. The probability table defines a
positive discrete measure supported on the subgroup levels, and the subgroup
transport calculation evaluates the target response at these support points.
The central task is therefore to construct a low-order discrete
representation that remains highly accurate over the target dilution range,
while preserving the nonnegative-real character of the transport-admissible
subgroup data and of the associated probabilities. Here accuracy concerns the quality of the effective-response approximation. Nonnegative-realness, by contrast, is an admissibility requirement throughout the construction: for the subgroup quantity entering the subgroup fixed-source transport calculation, for the subgroup probabilities as nonnegative discrete-measure weights, and for the effective cross sections supplied to the main multigroup transport calculation.

The classical moment-based construction of probability tables goes back to
Ribon \cite{RibonMaillard1986}, who matched a finite set of within-group
moments by an \(N\)-point discrete representation and recovered the subgroup
levels and probabilities through a Stieltjes--Pad\'e reconstruction. Chiba
\cite{ChibaUnesaki2006} retained this moment-based framework
while modifying the choice of preserved moments according to physical
considerations. These approaches are conceptually attractive because they
relate the discrete subgroup representation directly to the underlying
within-group distribution through a finite moment sequence. In floating-point
arithmetic, however, moment-based construction is often numerically fragile.
It typically requires explicit reconstruction from a truncated moment
sequence, the solution of ill-conditioned linear or Hankel-type systems, and
the nonlinear recovery of nodes and weights through root and residue
extraction. In difficult resonance groups, these stages can strongly amplify
roundoff perturbations, leading to loss of reality or nonnegativity in the
subgroup quantities or probabilities and, in turn, to nonphysical effective
cross sections. The main difficulty is therefore not only approximation
accuracy in exact arithmetic, but also the numerical stability of the
reconstruction procedure in finite precision.

Motivated by this difficulty, the present work reformulates Chiba's physically motivated affine moments \cite{ChibaUnesaki2006} as polynomial moments of a transformed positive measure. This yields an alternative construction route in which the subgroup rule is obtained by discrete-measure realization, symmetric Lanczos reduction, and Golub--Welsch extraction, rather than by explicit affine-moment inversion and Pad\'e-type root--residue recovery. The compression step preserves the real-nonnegative structure inherited from the underlying positive measure, and the reaction-channel levels are then reconstructed on the compressed nodes through coefficient matching in an orthogonal basis rather than through a Vandermonde solve.
In exact arithmetic, this route targets the same Gauss-type object as the classical Pad\'e construction for Stieltjes moment data \cite{Allen1974PadeGauss}; the distinction lies in the finite-precision procedure by which that target is constructed.

The contributions of this work are threefold. First, we reformulate Chiba's affine-order prescription as polynomial moments of a transformed positive measure, recasting probability-table construction as a structured positive-measure compression problem. Second, this reformulation yields a Lanczos--Golub--Welsch construction route that replaces the conventional moment--Pad\'e pipeline by discrete-measure realization, symmetric tridiagonal reduction, and orthogonal-basis reconstruction. Third, we present a response-level viewpoint that separates realization error from \(N\)-point compression error and clarifies the finite-precision behavior of the proposed construction.

Section~\ref{sec2} reviews the conventional moment--Pad\'e construction and its
finite-precision difficulties. Section~\ref{sec:method_lanczos} develops the transformed-measure
formulation and the corresponding Lanczos--Golub--Welsch construction.
Section~\ref{sec:results} examines the numerical behavior of the resulting method, and
Section~\ref{sec5} concludes the paper.

	\section{Problem setting and the conventional moment--Pad\'e construction}
  \label{sec2}

\subsection{Problem setting: induced state measure and subgroup observable}
\label{subsec:setting_subgroup}

Deterministic reactor-physics calculations typically employ a multigroup
discretization in neutron incident energy and require group-wise
effective microscopic cross sections as input to the transport solve.
For a reaction channel \(x\) in energy group
\(g:\;E\in [E_{g+1},E_g]\), the exact effective cross section is the
flux-weighted average
\begin{equation}
\sigma_{x,g}
\;\equiv\;
\frac{\displaystyle\int_{E_{g+1}}^{E_g} \sigma_x(E)\,\phi(E)\,dE}
{\displaystyle\int_{E_{g+1}}^{E_g} \phi(E)\,dE},
\label{eq:ribon_effxs_energy}
\end{equation}
where \(\phi(E)\) denotes the unknown within-group flux spectrum in the
target problem.

In resonance energy ranges, the within-group flux $\phi(E)$ exhibits
pronounced fine-energy structure due to resonance self-shielding.
Under the Bondarenko narrow-resonance approximation (NRA), the
scattering source and the background spectral shape vary slowly on the
scale of individual resonances, so that the energy dependence of the
flux is represented primarily through the self-shielding denominator.
Up to an energy-independent normalization,
\begin{equation}
\phi(E;\sigma_0)\propto \frac{1}{\sigma_t(E)+\sigma_0},
\qquad E\in [E_{g+1},E_g].
\label{eq:bondarenko_flux_form}
\end{equation}
where \(\sigma_{0}>0\) is a problem-dependent background cross section
parameter representing dilution and environment effects. Substituting
\eqref{eq:bondarenko_flux_form} into \eqref{eq:ribon_effxs_energy} yields the
NRA reference map
\begin{equation}
\sigma^{\mathrm{ref}}_{x,g}(\sigma_{0})
:=
\frac{\displaystyle\int_{E_{g+1}}^{E_g}\frac{\sigma_x(E)}{\sigma_t(E)+\sigma_{0}}\,dE}
{\displaystyle\int_{E_{g+1}}^{E_g}\frac{1}{\sigma_t(E)+\sigma_{0}}\,dE}.
\label{eq:nra_ref_eff_setting}
\end{equation}
Accordingly, the computational target in this work is the
dilution-response function
\(\sigma^{\mathrm{ref}}_{x,g}(\sigma_{0})\), evaluated on a prescribed
set of dilution points.

Equation \eqref{eq:bondarenko_flux_form} shows that, under the NRA, the energy
dependence enters through the resonant state \(\sigma_t(E)\). It is
therefore natural to reformulate the problem in cross-section space
rather than in energy space. Let \(\Delta E_g:=E_g-E_{g+1}\), and define
the uniform-in-energy probability measure
\[
\nu_g(dE):=\frac{1}{\Delta E_g}\,dE
\qquad\text{on }[E_{g+1},E_g].
\]
Assume that \(\sigma_t:[E_{g+1},E_g]\to (0,\infty)\) is Borel
measurable. The induced state measure on \((0,\infty)\) is the
pushforward
\[
\rho_g := (\sigma_t)_\# \nu_g,
\]
that is, for every bounded measurable test function \(\varphi\),
\begin{equation}
\int \varphi(\sigma)\,d\rho_g(\sigma)
=
\frac{1}{\Delta E_g}\int_{E_{g+1}}^{E_g}\varphi(\sigma_t(E))\,dE.
\label{eq:ribon_energy_to_sigma_compact}
\end{equation}
This formulation does not require the map \(E\mapsto \sigma_t(E)\) to be
monotone; any multi-branch resonance structure is absorbed into the
measure \(\rho_g\).

In general, the reaction-channel cross section \(\sigma_x(E)\) need not
be a single-valued function of \(\sigma_t(E)\). To avoid imposing such a
functional dependence, we introduce instead the finite
\(\sigma_x\)-weighted pushforward measure
\[
\rho^{(x)}_g := (\sigma_t)_\#(\sigma_x\,\nu_g),
\]
so that for every bounded measurable test function \(\psi\),
\begin{equation}
\int \psi(\sigma)\,d\rho^{(x)}_g(\sigma)
=
\frac{1}{\Delta E_g}\int_{E_{g+1}}^{E_g}\psi(\sigma_t(E))\,\sigma_x(E)\,dE.
\label{eq:ribon_weighted_pushforward}
\end{equation}

For \(\sigma_0>0\), define the denominator and numerator state-space
integrals
\begin{equation}
D_g(\sigma_0)
:=
\int \frac{1}{\sigma+\sigma_0}\,d\rho_g(\sigma),
\qquad
N_g^{(x)}(\sigma_0)
:=
\int \frac{1}{\sigma+\sigma_0}\,d\rho_g^{(x)}(\sigma).
\label{eq:setting_ratio_defs}
\end{equation}
Then the NRA reference effective cross section can be written as the
ratio
\begin{equation}
\sigma^{\mathrm{ref}}_{x,g}(\sigma_0)
=
\frac{N_g^{(x)}(\sigma_0)}{D_g(\sigma_0)}.
\label{eq:setting_ratio_ref}
\end{equation}
Thus, under the induced-measure formulation, the subgroup observable of
interest is naturally expressed not as a pointwise approximation problem
in energy, but as a ratio of two state-space integrals.

An \(N\)-level probability table with total levels
\(\{\sigma_{t,g,i}\}_{i=1}^N\), probabilities
\(\{p_{g,i}\}_{i=1}^N\), and reaction-channel levels
\(\{\sigma_{x,g,i}\}_{i=1}^N\) induces the atomic measures
\begin{equation}
\rho_{g,N}
=
\sum_{i=1}^N p_{g,i}\,\delta(\sigma-\sigma_{t,g,i}),
\qquad
\rho^{(x)}_{g,N}
=
\sum_{i=1}^N p_{g,i}\sigma_{x,g,i}\,\delta(\sigma-\sigma_{t,g,i}).
\label{eq:setting_atomic_measures}
\end{equation}
The corresponding subgroup prediction is therefore
\begin{equation}
\sigma^{\mathrm{PT}}_{x,g}(\sigma_0)
=
\frac{\displaystyle\int \frac{1}{\sigma+\sigma_0}\,d\rho^{(x)}_{g,N}(\sigma)}
{\displaystyle\int \frac{1}{\sigma+\sigma_0}\,d\rho_{g,N}(\sigma)}.
\label{eq:setting_ratio_pt}
\end{equation}
Evaluating the atomic integrals gives the familiar subgroup folding
formula
\begin{equation}
\sigma^{\mathrm{PT}}_{x,g}(\sigma_0)
=
\frac{\displaystyle\sum_{i=1}^{N} \frac{p_{g,i}\,\sigma_{x,g,i}}{\sigma_{t,g,i}+\sigma_{0}}}
{\displaystyle\sum_{i=1}^{N} \frac{p_{g,i}}{\sigma_{t,g,i}+\sigma_{0}}}.
\label{eq:nra_eff_setting}
\end{equation}

Equations \eqref{eq:nra_ref_eff_setting} and
\eqref{eq:nra_eff_setting} define the computational target adopted in
this work: construct subgroup parameters
\(\{(\sigma_{t,g,i},p_{g,i},\sigma_{x,g,i})\}_{i=1}^N\) such that
\(\sigma^{\mathrm{PT}}_{x,g}(\sigma_{0})\) reproduces the reference
response \(\sigma^{\mathrm{ref}}_{x,g}(\sigma_{0})\) over a prescribed
set of dilution points \(\sigma_{0}\in\mathcal{S}\). Throughout this
work,
\(\mathcal{S}=\{\sigma_{0,k}\}_{k=1}^{K}\)
denotes a logarithmically spaced grid spanning
\(\sigma_0\in[10^{-1},\,10^{6}]\)~barn.

\subsection{Ribon's moment--Pad\'e probability-table construction}
\label{subsec:ribon_pade}

The state-space formulation in Section~\ref{subsec:setting_subgroup}
identifies subgroup construction as the problem of replacing the induced
measure \(\rho_g\) by a low-cardinality atomic approximation. A classical
route, due to Ribon \cite{RibonMaillard1986}, constructs such an
approximation from a finite moment sequence through a Stieltjes--Pad\'e
reconstruction. Since this conventional route serves here only as the
baseline for later comparison and finite-precision analysis, we record
only the computational steps and notation needed below.

Within group \(g\), the total-cross-section state measure is approximated
by an \(N\)-point atomic measure
\begin{equation}
\rho_{g,N}(d\sigma)
:=
\sum_{i=1}^{N} p_{g,i}\,\delta(\sigma-\sigma_{t,g,i})\,d\sigma,
\qquad
p_{g,i}>0,
\quad
\sum_{i=1}^{N} p_{g,i}=1,
\label{eq:ribon_discrete_measure}
\end{equation}
so that, for suitable test functions \(f\),
\[
\int f(\sigma)\,d\rho_g(\sigma)
\approx
\sum_{i=1}^{N} p_{g,i}\,f(\sigma_{t,g,i}).
\]

Ribon's construction determines
\(\{(\sigma_{t,g,i},p_{g,i})\}_{i=1}^N\) from the moments of the induced
state measure. Define
\begin{equation}
m_k
:=
\int \sigma^k\,d\rho_g(\sigma)
=
\frac{1}{\Delta E_g}\int_{E_{g+1}}^{E_g}\sigma_t(E)^k\,dE,
\qquad
k=0,1,2,\dots,
\label{eq:ribon_moments}
\end{equation}
with \(m_0=1\). The atomic approximation \eqref{eq:ribon_discrete_measure}
is then required to match the first \(2N\) moments:
\begin{equation}
m_k
=
\sum_{i=1}^{N} p_{g,i}\,\sigma_{t,g,i}^{\,k},
\qquad
k=0,1,\dots,2N-1.
\label{eq:ribon_moment_matching}
\end{equation}
Equivalently, \(\rho_{g,N}\) is an \(N\)-point Gauss-type quadrature rule
for the state measure \(\rho_g\) \cite{Allen1974PadeGauss}.

Introduce the Stieltjes transform of \(\rho_g\),
\begin{equation}
G_g(z)
:=
\int \frac{1}{z-\sigma}\,d\rho_g(\sigma),
\qquad
z\in\mathbb{C}\setminus \operatorname{supp}(\rho_g),
\label{eq:ribon_stieltjes}
\end{equation}
whose Laurent expansion at \(z=\infty\) is
\begin{equation}
G_g(z)
=
\sum_{k=0}^{\infty}\frac{m_k}{z^{k+1}}.
\label{eq:ribon_stieltjes_series}
\end{equation}
For an \(N\)-point atomic measure,
\[
G_{g,N}(z)
=
\sum_{i=1}^{N}\frac{p_{g,i}}{z-\sigma_{t,g,i}},
\]
which is a rational function of type \([N\!-\!1/N]\). Ribon constructs
\(G_{g,N}\) as the \([N\!-\!1/N]\) Pad\'e approximant of the series
\eqref{eq:ribon_stieltjes_series}, that is,
\begin{equation}
G_g(z)-\frac{P_{N-1}(z)}{Q_N(z)}
=
\mathcal{O}\!\left(z^{-2N-1}\right),
\qquad
z\to\infty,
\label{eq:ribon_pade}
\end{equation}
where
\begin{equation}
Q_N(z)=z^N+q_1 z^{N-1}+\cdots+q_N,
\qquad
P_{N-1}(z)=a_0 z^{N-1}+a_1 z^{N-2}+\cdots+a_{N-1}.
\label{eq:ribon_QP_def}
\end{equation}

Equating coefficients in the Laurent expansion of \(Q_N(z)G_g(z)\)
yields the Hankel linear system for the denominator coefficients
\(q:=(q_1,\dots,q_N)^{\mathsf T}\):
\begin{equation}
\underbrace{\begin{bmatrix}
m_{1} & m_{2} & \cdots & m_{N} \\
m_{2} & m_{3} & \cdots & m_{N+1} \\
\vdots & \vdots & \ddots & \vdots \\
m_{N} & m_{N+1} & \cdots & m_{2N-1}
\end{bmatrix}}_{=:H_g}
\begin{bmatrix}
q_1 \\ q_2 \\ \vdots \\ q_N
\end{bmatrix}
=
-\begin{bmatrix}
m_{0} \\ m_{1} \\ \vdots \\ m_{N-1}
\end{bmatrix}.
\label{eq:ribon_hankel_matrix}
\end{equation}
Once \(Q_N\) is known, the numerator coefficients follow from the
polynomial part of \(Q_N(z)G_g(z)\), equivalently
\begin{equation}
a_k=\sum_{j=0}^{k} q_j\,m_{k-j},
\qquad
k=0,1,\dots,N-1,
\label{eq:ribon_P_from_m}
\end{equation}
with \(q_0=1\).

The subgroup total levels are then obtained as the roots of \(Q_N\):
\begin{equation}
Q_N(\sigma_{t,g,i})=0,
\qquad
i=1,\dots,N.
\label{eq:ribon_nodes}
\end{equation}
Assuming that these roots are simple, the subgroup probabilities are
recovered from the corresponding residues,
\begin{equation}
p_{g,i}
=
\operatorname*{Res}_{z=\sigma_{t,g,i}}
\frac{P_{N-1}(z)}{Q_N(z)}
=
\frac{P_{N-1}(\sigma_{t,g,i})}{Q_N'(\sigma_{t,g,i})},
\qquad
i=1,\dots,N.
\label{eq:ribon_weights}
\end{equation}
In exact arithmetic, when \(\{m_k\}\) is an exact Stieltjes moment
sequence of a positive measure, this reconstruction is consistent with a
positive \(N\)-point atomic rule \cite{Allen1974PadeGauss}.

Once \(\{(\sigma_{t,g,i},p_{g,i})\}_{i=1}^N\) have been recovered, the
subgroup reaction-channel levels are determined from mixed moments.
Define
\begin{equation}
d_k
:=
\frac{1}{\Delta E_g}\int_{E_{g+1}}^{E_g}
\sigma_x(E)\,\sigma_t(E)^k\,dE,
\qquad
k=0,1,\dots,N-1.
\label{eq:ribon_mixed_moments}
\end{equation}
The unknown partial levels \(\{\sigma_{x,g,i}\}_{i=1}^N\) are then
obtained by enforcing
\begin{equation}
d_k
=
\sum_{i=1}^{N}
p_{g,i}\,\sigma_{x,g,i}\,\sigma_{t,g,i}^{\,k},
\qquad
k=0,1,\dots,N-1.
\label{eq:ribon_partial_match}
\end{equation}
This is a Vandermonde-type linear system for the subgroup
reaction-channel levels. With all subgroup parameters determined, the
effective cross section is finally computed from the folding formula
\eqref{eq:nra_eff_setting}.

The construction in this subsection is classical and conceptually natural: it recovers an \(N\)-point atomic rule from moment information through Stieltjes--Pad\'e reconstruction \cite{AptekarevBuslaevMartinezFinkelshteinSuetin2011Pade}. The issue addressed next is therefore not its exact-arithmetic consistency, but its numerical stability in finite precision.

\subsection{Finite-precision instability and score-based diagnosis of the conventional moment--Pad\'e construction}
\label{sec:illposedness_pade}

In exact arithmetic, Pad\'e--Stieltjes reconstruction from exact Stieltjes
moment data recovers an \(N\)-point Gauss-type quadrature for a positive
measure and hence yields real nodes and positive weights. In floating-point
arithmetic, however, the conventional moment--Pad\'e construction is a
multi-stage sensitivity problem
\cite{BakerGravesMorris1996,Hopkins1982,Bai2002}.
The present subgroup construction inherits the same sensitivity mechanisms and
further appends a mixed-moment reconstruction of the reaction-channel levels on
the recovered nodes. Unless otherwise stated, all main-text numerical and
diagnostic results in this subsection are for the
\(^{238}\mathrm{U}\) capture case under a fine-group structure.

For each resonance group, the conventional construction realizes the map
\[
\{m_k\}_{k=0}^{2N-1}
\longmapsto
\{(\sigma_{t,g,i},\,p_{g,i},\,\sigma_{x,g,i})\}_{i=1}^N .
\]
In finite precision, the corresponding perturbation chain is
\[
\delta m \longrightarrow \delta q \longrightarrow (\delta \sigma_t,\delta p)
\longrightarrow \delta \sigma_x ,
\]
where \(\delta m\) denotes perturbations in the preserved moments,
\(\delta q\) the induced perturbation in the Pad\'e denominator coefficients,
\((\delta \sigma_t,\delta p)\) the perturbations in the recovered subgroup
levels and probabilities, and \(\delta \sigma_x\) the perturbation in the
reconstructed partial reaction levels.

At the first stage, the \([N\!-\!1/N]\) Pad\'e condition leads to the Hankel
system \eqref{eq:ribon_hankel_matrix} for the denominator coefficients
\(\bm q=(q_1,\ldots,q_N)^{\mathsf T}\). A standard first-order estimate gives
\begin{equation}
\frac{\|\delta\bm q\|}{\|\bm q\|}
\;\lesssim\;
\kappa_2(H_g)
\left(
\frac{\|\delta H_g\|}{\|H_g\|}
+
\frac{\|\delta \bm h_g\|}{\|\bm h_g\|}
\right),
\label{eq:pade_q_sensitivity_revised}
\end{equation}
where \(H_g\) is the groupwise Hankel moment matrix and
\(\bm h_g\) is the corresponding shifted moment vector.

At the second stage, the subgroup total levels are recovered as the roots of
the Pad\'e denominator polynomial \(Q\), and the subgroup probabilities from
the corresponding residues. If \(\sigma_i\) is a simple root of \(Q\) and the
polynomial is perturbed by \(\delta Q\), then
\begin{equation}
\delta \sigma_i
\;\approx\;
-\frac{\delta Q(\sigma_i)}{Q'(\sigma_i)}.
\label{eq:root_first_order_revised}
\end{equation}
Hence clustered or near-multiple roots, for which \(|Q'(\sigma_i)|\) is small,
can induce large node perturbations under small coefficient perturbations.
Residue recovery inherits the same sensitivity. This is consistent with the
pole sensitivity emphasized in the modern Pad\'e literature, where finite-precision
difficulties often appear through near-singularity and spurious pole--zero
pairs \cite{GonnetGuettelTrefethen2013}. In the present setting, the practical manifestation is loss of real,
nonnegative subgroup data.

At the third stage, the partial reaction levels are obtained from a
Vandermonde-type mixed-moment system. This introduces a further amplification
mechanism, since irregular or poorly separated nodes typically lead to
ill-conditioned Vandermonde matrices. Negative or complex subgroup quantities
are therefore the cumulative effect of this three-stage perturbation cascade.

These observations motivate the diagnostic scores
\begin{align}
s_g^{(\mathrm{sep})}
&=
-\log_{10}\!\Big(\min_{i\ne j}|\sigma_i-\sigma_j|\Big),
\label{eq:score_sep}
\\
s_g^{(\sigma)}
&=
-\log_{10}\!\Big(\min_i \big|(\Pi_g^{(\sigma)})'(\sigma_i)\big|\Big),
\qquad
\Pi_g^{(\sigma)}(x):=\prod_{i=1}^N (x-\sigma_i),
\label{eq:score_sigma}
\\
s_g^{(\rho)}
&=
-\log_{10}\!\Big(\min_i \big|(\Pi_g^{(\rho)})'(\rho_i)\big|\Big),
\qquad
\Pi_g^{(\rho)}(y):=\prod_{i=1}^N (y-\rho_i),
\label{eq:score_rho}
\\
s_g^{(H)}
&=
\log_{10}\kappa_2(H_g),
\label{eq:score_H}
\end{align}
together with the comparison score
\begin{equation}
s_g^{(V)}=\log_{10}\kappa_2(V_g).
\label{eq:score_V}
\end{equation}
Here \(s_g^{(\mathrm{sep})}\) measures node clustering,
\(s_g^{(\sigma)}\) and \(s_g^{(\rho)}\) measure first-order root sensitivity
in the \(\sigma\)- and transformed-space representations, respectively, and
\(s_g^{(H)}\) measures first-stage amplification through the Hankel condition
number.

At fixed order \(N\), groups are partitioned into
nonnegative-real-preserving and nonnegative-real-violating classes. For any
diagnostic score \(s_g\), discriminatory power is quantified by the receiver
operating characteristic (ROC) curve and, in particular, by the area under this
curve (AUC),
\begin{equation}
\mathrm{AUC}
=
\frac{U}{n_{\mathrm{viol}}\,n_{\mathrm{pres}}},
\label{eq:auc_def}
\end{equation}
where \(U\) is the Mann--Whitney statistic and
\(n_{\mathrm{viol}}, n_{\mathrm{pres}}\) are the class counts. We also record
bootstrap 95\% confidence intervals for the AUC values. Only nondegenerate
pre-collapse orders, at which both classes are present, are included in the AUC
interpretation.

\begin{figure}[htbp]
  \centering
  \includegraphics[width=0.72\linewidth]{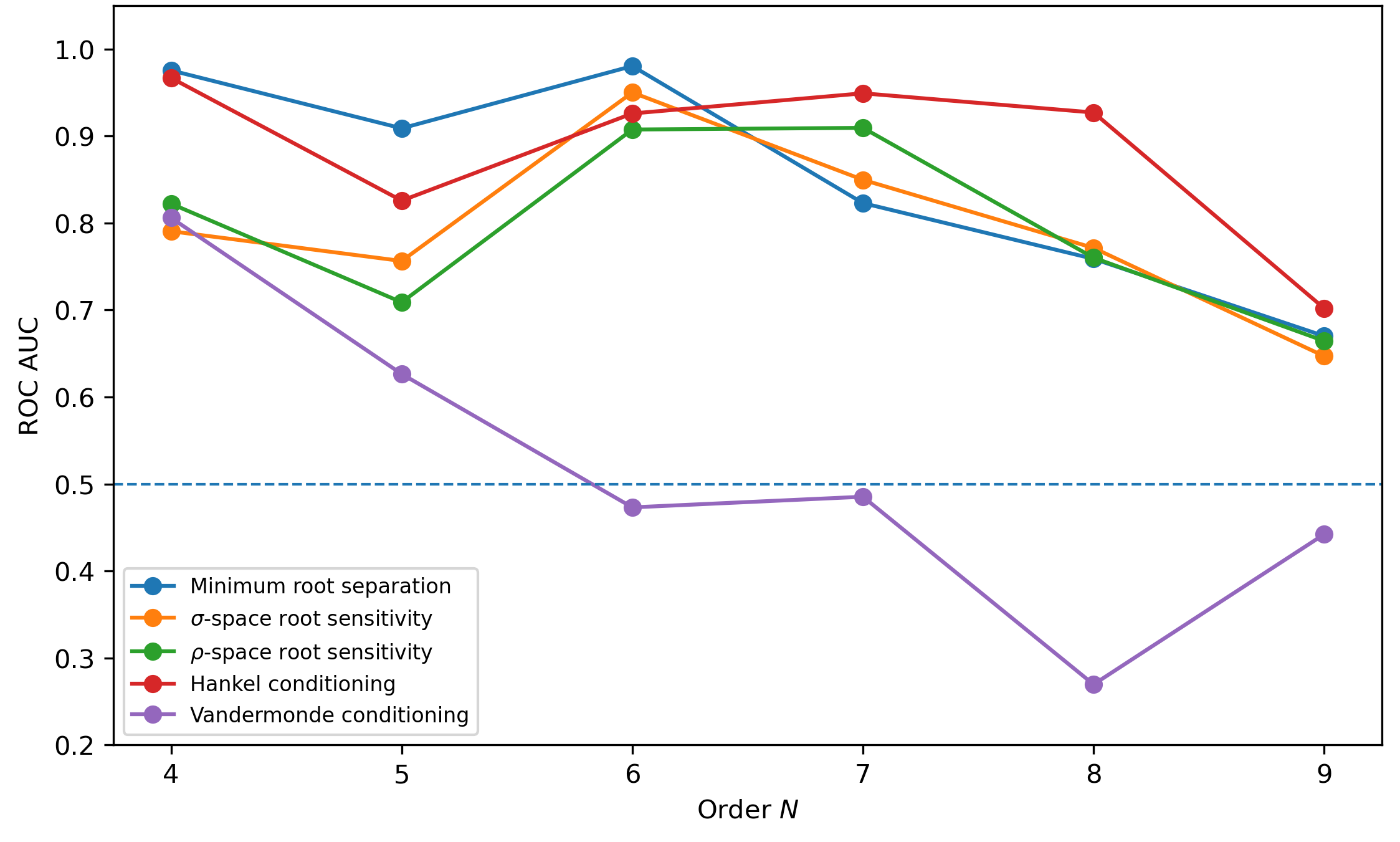}
  \caption{
    ROC AUC summary for the diagnostic scores $s_g^{(\mathrm{sep})}$,
    $s_g^{(\sigma)}$, $s_g^{(\rho)}$, $s_g^{(H)}$, and $s_g^{(V)}$ defined in
    \eqref{eq:score_sep}--\eqref{eq:score_V}, shown as functions of reconstruction order $N$. The dashed
    horizontal line at $\mathrm{AUC}=0.5$ marks chance-level separation. AUC values are shown only for nondegenerate pre-collapse orders at which both
    preserving and violating groups are present; the corresponding class counts are reported separately in the
    Supplementary Material. Bootstrap 95\% confidence intervals for the four
    principal scores are also reported there. The minimum root-separation score, the $\sigma$-space and
    $\rho$-space root-sensitivity scores, and the Hankel-conditioning score retain
    substantial discriminatory power over the tested orders, whereas the
    Vandermonde-conditioning score deteriorates rapidly and approaches chance level,
    or falls below it, at moderate and high orders.
  }
  \label{fig:diagnostic-auc-summary}
\end{figure}

Figure~\ref{fig:diagnostic-auc-summary} shows that the four principal scores
\eqref{eq:score_sep}--\eqref{eq:score_H} retain nontrivial discriminatory power
over the nondegenerate pre-collapse orders, whereas the Vandermonde score
\eqref{eq:score_V} is substantially weaker. The most informative signatures are
concentrated in the Hankel solve and the subsequent root-recovery stage.

A representative fixed-order view is given in
Figure~\ref{fig:diagnostic-boxplot-N7}, which shows the groupwise distributions
of the four principal scores at \(N=7\).
Figure~\ref{fig:diagnostic-boxplot-N7} shows a systematic upward shift of the
violating groups relative to the preserving groups in all four panels. The
clearest separation is observed for the \(\rho\)-space root-sensitivity and
Hankel-conditioning scores.

\begin{figure}[htbp]
  \centering
  \includegraphics[width=\linewidth]{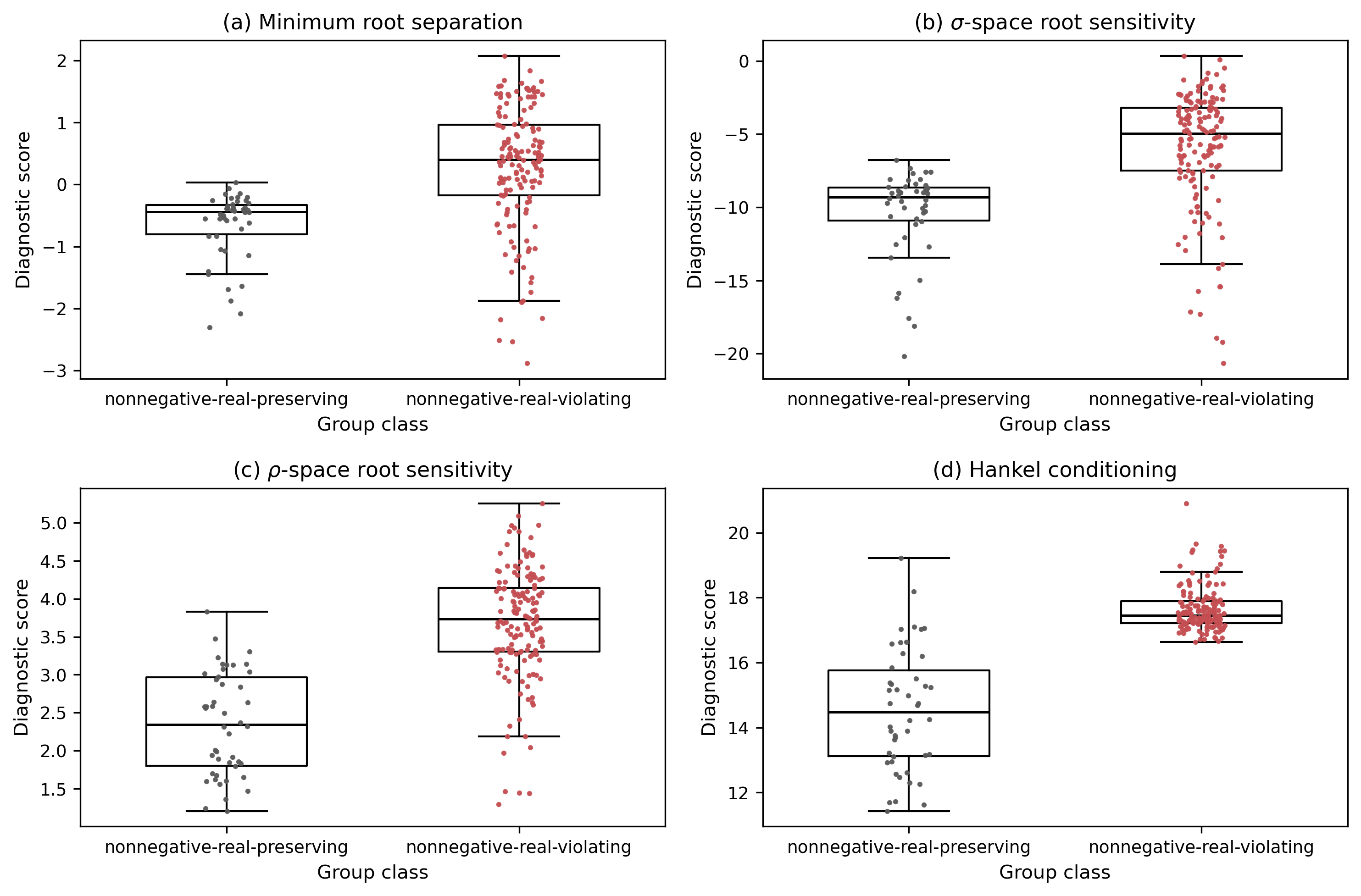}
  \caption{
    Distributions of the four principal diagnostic scores at $N=7$ for
    nonnegative-real-preserving and nonnegative-real-violating groups:
    (a) minimum root separation, (b) $\sigma$-space root sensitivity,
    (c) $\rho$-space root sensitivity, and (d) Hankel conditioning. Each point
    represents one energy group. This order lies in the mixed regime, so that both classes are represented by substantial samples. Larger scores indicate
    greater numerical danger.
  }
  \label{fig:diagnostic-boxplot-N7}
\end{figure}

In summary, the numerical difficulty does not lie in the exact-arithmetic correspondence between Pad\'e--Stieltjes reconstruction and Gauss quadrature itself, but in the conditioning of the moment-based representation and the sensitivity of the floating-point reconstruction maps. This diagnosis motivates the alternative route developed
next, which departs from the conventional explicit-moment Pad\'e pipeline and
instead proceeds through positive-measure realization and orthogonal reduction
\cite{Bai2002,GanderKarp2001}.

\section{Proposed method: Lanczos--Golub--Welsch probability table construction}
\label{sec:method_lanczos}

The key step of the present construction is to make explicit that the Chiba Case-C affine-order moments can be equivalently viewed as polynomial moments of a transformed positive measure. This reformulation yields an alternative constructive route for probability-table generation. The proposed construction does not proceed through direct inversion of the affine moment sequence. Instead, it first realizes the transformed measure on a computable discrete support and then applies Gauss compression to the realized positive measure. This realized positive-measure compression scheme provides the basis for the Lanczos--Golub--Welsch construction presented below.
Section~\ref{sec:method_lanczos} first introduces the transformed measure and its discrete realization, then studies realization and compression errors at the measure level, next establishes the required nonnegative-real properties and stable channel reconstruction, and finally connects these results to subgroup-observable stability via the ratio bound \eqref{eq:ratio_stability_bound}.

\subsection{Measure formulation and discrete realization}
\label{subsec:measure_uniongrid_casec}

Section~\ref{subsec:setting_subgroup} showed that the subgroup observable is a
ratio of state-space integrals. The proposed construction is based on a
transformed positive-measure formulation of Chiba's affine-order moments. The
present subsection introduces that measure, defines its computable discrete
realization on the adopted pointwise energy representation, and then separates
realization and compression errors at both the measure and observable levels.

Fix an energy group \(g:\,E\in [E_L,E_R]\), and endow it with the
uniform-in-energy probability measure
\[
\nu_g(dE):=\frac{1}{\Delta E_g}\,dE,
\qquad
\Delta E_g:=E_R-E_L.
\]
Let \(\sigma_t:[E_L,E_R]\to (0,\infty)\) denote the reconstructed pointwise
total microscopic cross section. As in
Section~\ref{subsec:setting_subgroup}, the mapping
\(E\mapsto \sigma_t(E)\) induces the state distribution on \((0,\infty)\) as
the pushforward measure
\[
\rho_g := (\sigma_t)_\# \nu_g,
\]
that is,
\begin{equation}
\int \varphi(\sigma)\,d\rho_g(\sigma)
=
\frac{1}{\Delta E_g}\int_{E_L}^{E_R}
\varphi\!\bigl(\sigma_t(E)\bigr)\,dE,
\label{eq:rho_pushforward_identity_clean_new}
\end{equation}
for every bounded Borel-measurable test function \(\varphi\).

We now consider the affine-order sequence of Chiba Case C
\cite{ChibaUnesaki2006},
\[
n_k = a + bk,
\qquad
k=0,1,\ldots,2N-1,
\]
with
\[
a=-1,
\qquad
b=\frac{9}{4(2N-1)}.
\]
For each prescribed subgroup order $N$, the associated Chiba Case-C affine-order sequence determines a corresponding transformed variable $z=\sigma_t^{\,b_N}$ and hence an $N$-dependent transformed measure, which we continue to denote by $\mu_g$ when no confusion can arise.

To encode these moments in a form compatible with positive-measure compression,
introduce the transformed variable
\[
z=\sigma^b,
\]
and define the unnormalized positive Borel measure \(\mu_g\) on \((0,\infty)\)
by
\begin{equation}
\mu_g := (z)_\#\bigl(\sigma^a \rho_g\bigr),
\qquad\text{equivalently}\qquad
\int f(z)\,d\mu_g(z)
=
\frac{1}{\Delta E_g}\int_{E_L}^{E_R}
f\!\bigl(\sigma_t(E)^b\bigr)\,\sigma_t(E)^a\,dE,
\label{eq:mu_energy_form_clean_new}
\end{equation}
for every bounded Borel-measurable test function \(f\). In particular, with
\(f(z)=z^k\),
\[
\int z^k\,d\mu_g(z)
=
\frac{1}{\Delta E_g}\int_{E_L}^{E_R}
\sigma_t(E)^{a+bk}\,dE,
\qquad
k=0,1,\ldots,2N-1.
\]
Thus polynomial moments of \(\mu_g\) reproduce exactly the affine-order moments
prescribed by Chiba Case C.

Let
\[
c_0:=\int 1\,d\mu_g>0,
\]
and define the normalized probability measure
\begin{equation}
\mu_{g,\mathrm{norm}}
:=
\frac{1}{c_0}\,\mu_g.
\label{eq:mu_norm_def_clean_new}
\end{equation}
The transformed-measure construction requires only mild regularity: it is
sufficient that \(\sigma_t\) be Borel measurable, positive almost everywhere on
\([E_L,E_R]\), and that
\[
\int_{E_L}^{E_R}\sigma_t(E)^a\,dE < \infty .
\]
Under these assumptions, \(\mu_g\) is a finite positive Borel measure on
\((0,\infty)\), and its polynomial moments are well defined whenever
\[
\int_{E_L}^{E_R}\sigma_t(E)^{a+bk}\,dE < \infty .
\]
For Chiba Case C, \(a=-1\), so the additional requirement is the integrability
of \(\sigma_t^{-1}\) on the group interval. In the discrete union-grid
realization used below, we impose the stronger computational assumption that
the retained union-grid segments have positive and finite endpoint values of
\(\sigma_t\), so that the nodal quantities introduced below are well defined and
finite.

In computation, \(\mu_{g,\mathrm{norm}}\) is first realized on the adopted
pointwise energy representation. Let \(\{E_j\}_{j=0}^{J}\) be the sorted union
of all tabulation points in \([E_L,E_R]\), including both group boundaries. A
normalized within-group energy integral is then approximated by the trapezoidal
average
\[
\frac{1}{\Delta E_g}\int_{E_L}^{E_R} h(E)\,dE
\;\approx\;
\sum_{j=0}^{J}\omega_j\,h(E_j),
\qquad
\omega_j\ge 0,
\qquad
\sum_{j=0}^{J}\omega_j=1.
\]
Define
\[
\pi_j:=\omega_j\,\sigma_t(E_j)^a,
\qquad
z_j:=\sigma_t(E_j)^b,
\qquad
w_j:=\pi_j\Big/\sum_{\ell=0}^{J}\pi_\ell.
\]
This yields the computable discrete probability measure
\begin{equation}
\mu_M
:=
\sum_{j=0}^{J} w_j\,\delta_{z_j},
\qquad
w_j>0,
\qquad
\sum_{j=0}^{J} w_j=1.
\label{eq:mu_M_def_clean_new}
\end{equation}
By construction, \(\mu_M\) is a positivity-preserving discrete realization of
\(\mu_{g,\mathrm{norm}}\). Its role is to provide a computable positive input
measure from which the subsequent \(N\)-point compression is constructed,
without explicit inversion of the affine moment sequence.

For later use, let
\[
\mu_N
:=
\sum_{i=1}^{N} p_i\,\delta_{\widehat z_i},
\qquad
p_i>0,
\qquad
\sum_{i=1}^{N}p_i=1,
\]
denote an \(N\)-point compression of \(\mu_M\). For any bounded
Borel-measurable test function \(f\), define
\[
\begin{aligned}
I(f)   &:= \int f(z)\,d\mu_{g,\mathrm{norm}}(z), \qquad
I_M(f) := \int f(z)\,d\mu_M(z)=\sum_{j=0}^{J} w_j f(z_j), \\
I_N(f) &:= \int f(z)\,d\mu_N(z)=\sum_{i=1}^{N} p_i f(\widehat z_i).
\end{aligned}
\]
Then
\begin{equation}
I(f)-I_N(f)
=
\bigl(I(f)-I_M(f)\bigr)
+
\bigl(I_M(f)-I_N(f)\bigr).
\label{eq:error_split}
\end{equation}
Equation~\eqref{eq:error_split} separates the total error into a realization
term, arising from replacement of \(\mu_{g,\mathrm{norm}}\) by the computable
discrete measure \(\mu_M\), and a compression term, arising from reduction of
\(\mu_M\) to the \(N\)-point rule \(\mu_N\).

The subgroup observable of interest, however, is a ratio of two state-space
integrals rather than a single linear functional; see
\eqref{eq:setting_ratio_ref}--\eqref{eq:setting_ratio_pt}. The following elementary ratio estimate transfers numerator and denominator errors to the final subgroup response.

Let
\[
R_g^{(x)}(\sigma_0):=\sigma^{\mathrm{ref}}_{x,g}(\sigma_0)
=\frac{N_g^{(x)}(\sigma_0)}{D_g(\sigma_0)},
\qquad
R_{g,N}^{(x)}(\sigma_0):=\sigma^{\mathrm{PT}}_{x,g}(\sigma_0)
=\frac{N_{g,N}^{(x)}(\sigma_0)}{D_{g,N}(\sigma_0)}.
\]
Likewise, let \(D_{g,M}\) and \(N_{g,M}^{(x)}\) denote the corresponding
denominator and numerator associated with the discrete realization \(\mu_M\).

To pass from the measure-level decomposition to the observable level, we use
the following elementary ratio estimate. Fix \(g\), \(x\), and \(\sigma_0>0\),
and write
\[
R_g^{(x)}:=\frac{N_g^{(x)}}{D_g},
\qquad
R_{g,N}^{(x)}:=\frac{N_{g,N}^{(x)}}{D_{g,N}}.
\]
Assume that \(D_g>0\), and define
\[
\varepsilon_D:=|D_{g,N}-D_g|,
\qquad
\varepsilon_N:=|N_{g,N}^{(x)}-N_g^{(x)}|.
\]
If \(\varepsilon_D<D_g\), then
\begin{equation}
\bigl|R_{g,N}^{(x)}-R_g^{(x)}\bigr|
\le
\frac{\varepsilon_N+|R_g^{(x)}|\,\varepsilon_D}{D_g-\varepsilon_D}.
\label{eq:ratio_stability_bound}
\end{equation}
Indeed,
\[
R_{g,N}^{(x)}-R_g^{(x)}
=
\frac{N_{g,N}^{(x)}}{D_{g,N}}-\frac{N_g^{(x)}}{D_g}
=
\frac{
D_g\bigl(N_{g,N}^{(x)}-N_g^{(x)}\bigr)
-
N_g^{(x)}\bigl(D_{g,N}-D_g\bigr)
}{
D_g\,D_{g,N}
},
\]
so, using \(N_g^{(x)}=R_g^{(x)}D_g\),
\[
\bigl|R_{g,N}^{(x)}-R_g^{(x)}\bigr|
\le
\frac{\varepsilon_N+|R_g^{(x)}|\,\varepsilon_D}{|D_{g,N}|}.
\]
Moreover,
\[
|D_{g,N}|
\ge
D_g-|D_{g,N}-D_g|
=
D_g-\varepsilon_D
>
0,
\]
which yields \eqref{eq:ratio_stability_bound}.

Equation~\eqref{eq:ratio_stability_bound} reduces observable-level accuracy to control of
the numerator and denominator errors, provided the denominator remains bounded
away from zero. Since the dilution set \(\mathcal S\) used in this work is
finite,
\[
d_{*,g}:=\min_{\sigma_0\in\mathcal S} D_g(\sigma_0)>0,
\]
and the same estimate holds uniformly over \(\sigma_0\in\mathcal S\). Moreover,
by the triangle inequality,
\begin{align}
\varepsilon_D(\sigma_0)
&\le
\bigl|D_g(\sigma_0)-D_{g,M}(\sigma_0)\bigr|
+
\bigl|D_{g,M}(\sigma_0)-D_{g,N}(\sigma_0)\bigr|,
\label{eq:denominator_two_stage_split}
\\
\varepsilon_N(\sigma_0)
&\le
\bigl|N_g^{(x)}(\sigma_0)-N_{g,M}^{(x)}(\sigma_0)\bigr|
+
\bigl|N_{g,M}^{(x)}(\sigma_0)-N_{g,N}^{(x)}(\sigma_0)\bigr|.
\label{eq:numerator_two_stage_split}
\end{align}
The first term in each decomposition is the realization error associated with
discretization of the induced measure, while the second is the additional
compression error introduced by the \(N\)-point reduction. Thus the
effective-cross-section error inherits the same two-stage structure as
\eqref{eq:error_split}. Once the realization term dominates, refining the
\(N\)-point compression alone cannot materially improve the observable.

In the present work, two realization strategies are used for reference
evaluation and for constructing the discrete input measure. The first is a
trapezoidal union-grid realization, in which \(\mu_M\) is constructed directly
on the union grid with nonnegative normalized trapezoidal weights. Its main
role is to provide a structurally matched baseline: the same nodal
representation is used in the reference evaluation and in the input discrete
measure, so that the compression term in \eqref{eq:error_split} can be assessed
with minimal additional mismatch from the realization itself.

The second realization follows the piecewise-linear interpolation of the
adopted ENDF/B-VIII.1 \(^{238}\mathrm{U}\) total and capture cross-section
data. Here the reference response is evaluated by segmentwise analytic
integration, and the input discrete measure is constructed by Gauss--Legendre
quadrature on each union-grid segment. This variant is used as a realization
benchmark tied directly to the adopted ENDF piecewise-linear interpolation, but
not necessarily as a closer approximation to the underlying physics beyond that
model.

\subsection{Exact-arithmetic Gauss compression and subgroup reconstruction}
\label{subsec:exact_gauss_compression}

We now address the compression step in the decomposition
\eqref{eq:error_split}. Starting from the realized positive discrete measure
\[
\mu_M(dz)=\sum_{j=0}^{J} w_j\,\delta(z-z_j)\,dz,
\qquad
w_j>0,
\qquad
\sum_{j=0}^{J} w_j=1,
\]
the objective is to construct an \(N\)-point compressed rule that preserves the
essential moment information of \(\mu_M\) while retaining the positivity
structure of the scalar quadrature variables. This subsection concerns that
construction in exact arithmetic. The subsequent reconstruction of
reaction-channel levels is then carried out on the fixed compressed nodes and
weights, but its role is to preserve prescribed mixed-moment information rather
than to provide a componentwise positivity theorem.

Introduce the diagonal matrix and normalized starting vector
\[
A:=\mathrm{diag}(z_0,\ldots,z_J),
\qquad
v_0:=(\sqrt{w_0},\ldots,\sqrt{w_J})^{\mathsf T}.
\]
Then, for every polynomial \(f\),
\begin{equation}
v_0^{\mathsf T} f(A) v_0
=
\sum_{j=0}^{J} w_j f(z_j)
=
\int f(z)\,d\mu_M(z),
\label{eq:quad_form_identity}
\end{equation}
so that the pair \((A,v_0)\) realizes the moment functional of \(\mu_M\) on the
polynomial space.

Applying \(N\) steps of the symmetric Lanczos process to \((A,v_0)\) generates
an orthonormal basis
\[
V_N=[v_0,v_1,\ldots,v_{N-1}]
\]
of the Krylov subspace
\[
\mathcal K_N(A,v_0)=\mathrm{span}\{v_0,Av_0,\ldots,A^{N-1}v_0\},
\]
together with the three-term recurrence
\[
Av_k=\beta_k v_{k-1}+\alpha_k v_k+\beta_{k+1}v_{k+1},
\qquad
k=0,\ldots,N-1,
\]
with \(\beta_0=0\). Equivalently, the projected matrix
\[
J_N:=V_N^{\mathsf T}AV_N
\]
is the real symmetric tridiagonal Jacobi matrix associated with the realized measure \cite{Gautschi1982GeneratingOrthogonalPolynomials},
\begin{equation}
J_N=
\begin{pmatrix}
\alpha_0 & \beta_1 \\
\beta_1  & \alpha_1 & \ddots \\
         & \ddots   & \ddots & \beta_{N-1}\\
         &          & \beta_{N-1} & \alpha_{N-1}
\end{pmatrix}.
\label{eq:Jn_def}
\end{equation}

\begin{proposition}
\label{prop:gauss_compression}
Assume that the \(N\)-step Lanczos process applied to \((A,v_0)\) proceeds
without early breakdown, and let
\[
J_N = Q\Lambda Q^{\mathsf T},
\qquad
\Lambda=\mathrm{diag}(\lambda_1,\ldots,\lambda_N),
\]
be its eigendecomposition, where the columns \(q_i\) of \(Q\) are orthonormal
eigenvectors. Define
\[
\widehat z_i:=\lambda_i,
\qquad
p_i:=(e_1^{\mathsf T}q_i)^2,
\qquad
i=1,\ldots,N,
\]
with \(e_1=(1,0,\ldots,0)^{\mathsf T}\in\mathbb R^N\), and set
\[
\mu_N(dz)=\sum_{i=1}^{N} p_i\,\delta(z-\widehat z_i)\,dz.
\]
Then \(\mu_N\) is the \(N\)-point Gauss quadrature rule associated with
\(\mu_M\), namely
\begin{equation}
\int \pi(z)\,d\mu_N(z)
=
\int \pi(z)\,d\mu_M(z),
\qquad
\forall\,\pi\in\mathbb P_{2N-1}.
\label{eq:gauss_exactness}
\end{equation}
Moreover, the nodes \(\widehat z_i\) are real and lie in the convex hull of
\(\operatorname{supp}(\mu_M)\subset(0,\infty)\), while the weights satisfy
\[
p_i>0,
\qquad
\sum_{i=1}^{N}p_i=1.
\]
If \(z=\sigma_t^b\) with \(b>0\), then the mapped total subgroup levels
\[
\sigma_{t,i}=\widehat z_i^{1/b}
\]
are real and nonnegative.
\end{proposition}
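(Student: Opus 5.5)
The plan is to pass through the polynomial identity \eqref{eq:quad_form_identity} and show that the Lanczos projection reproduces the moment functional up to degree \(2N-1\). Absence of early breakdown gives \(\beta_1,\ldots,\beta_{N-1}>0\), and \(V_N\) has orthonormal columns spanning \(\mathcal K_N(A,v_0)\). Writing \(v_k=\pi_k(A)v_0\), the three-term recurrence shows that the \(\pi_k\) are the orthonormal polynomials of \(\mu_M\), with \(\deg\pi_k=k\). This also covers the ordering of the eigenvalues, weights and nodes below.

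\textbf{Step 1: moment matching through degree \(2N-1\).} First I show by induction that
\[
A^k v_0=V_N J_N^k e_1,\qquad k=0,\ldots,N-1.
\]
The inductive step uses that \(AV_N=V_NJ_N+\beta_N v_N e_N^{\mathsf T}\), and that \(J_N^{k}e_1\) has zero last component for \(k\le N-2\) because \(J_N\) is tridiagonal.

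Next, for \(\pi=z^{j+k}\) with \(0\le j,k\le N-1\), we have
\[
v_0^{\mathsf T}A^{j+k}v_0=(A^jv_0)^{\mathsf T}(A^kv_0)=e_1^{\mathsf T}J_N^{j}V_N^{\mathsf T}V_NJ_N^ke_1=e_1^{\mathsf T}J_N^{j+k}e_1 .
\]
This covers degrees up to \(2N-2\). For degree \(2N-1\), write \(A^{2N-1}=A^{N-1}\,A\,A^{N-1}\). Then
\[
v_0^{\mathsf T}A^{2N-1}v_0=e_1^{\mathsf T}J_N^{N-1}V_N^{\mathsf T}AV_NJ_N^{N-1}e_1=e_1^{\mathsf T}J_N^{2N-1}e_1,
\]
using \(V_N^{\mathsf T}AV_N=J_N\).

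Finally, with the eigendecomposition,
\[
e_1^{\mathsf T}\pi(J_N)e_1=\sum_i (e_1^{\mathsf T}q_i)^2\pi(\lambda_i)=\int\pi\,d\mu_N .
\]
Combined with \eqref{eq:quad_form_identity}, this gives \eqref{eq:gauss_exactness}. That \(\mu_N\) is \emph{the} Gauss rule then follows from uniqueness of an \(N\)-point rule of degree \(2N-1\). Alternatively, the \(\lambda_i\) are the zeros of the \(N\)-th orthogonal polynomial, since \(\det(\lambda I-J_N)\) is proportional to \(\pi_N\).

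\textbf{Step 2: reality, weights and location of the nodes.} Reality of the \(\widehat z_i\) is immediate because \(J_N\) is real symmetric. Since \(\|q_i\|=1\), the sum of the weights is \(\sum_i p_i=e_1^{\mathsf T}QQ^{\mathsf T}e_1=1\).

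Strict positivity \(p_i>0\) needs \(e_1^{\mathsf T}q_i\ne0\). Suppose \(q_i\) had zero first component. The tridiagonal eigen-equation with all \(\beta_k\ne0\) would then force every component of \(q_i\) to vanish, by forward recursion. This is the one point where the no-breakdown hypothesis is essential, and I expect it to be the main step needing care. The same recursion also shows that the eigenvalues are simple.

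For the convex hull, Rayleigh quotients give
\[
\lambda_i=q_i^{\mathsf T}J_Nq_i=(V_Nq_i)^{\mathsf T}A(V_Nq_i),
\]
where \(V_Nq_i\) is a unit vector. Since \(A\) is diagonal with entries \(z_j\), this is a convex combination of the \(z_j\). It therefore lies in \([\min z_j,\max z_j]\subset(0,\infty)\), because each \(z_j=\sigma_t(E_j)^b>0\) under the union-grid positivity assumption.

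\textbf{Step 3: mapped subgroup levels.} Since each \(\widehat z_i>0\), the quantity \(\sigma_{t,i}=\widehat z_i^{1/b}\) is the real positive root for \(b>0\). It is in particular nonnegative, and it lies between \(\min_j\sigma_t(E_j)\) and \(\max_j\sigma_t(E_j)\) by monotonicity of \(z\mapsto z^{1/b}\).
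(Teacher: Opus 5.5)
Your proof is correct. It follows the same skeleton as the paper's proof (exactness, reality, location of the nodes, weights, mapped levels), but you prove each ingredient where the paper cites or asserts it.

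\textbf{Exactness.} The paper disposes of \eqref{eq:gauss_exactness} by appeal to the classical Lanczos--Golub--Welsch result. You derive it directly from the Krylov identity \(A^k v_0=V_NJ_N^ke_1\) for \(k\le N-1\), together with the splitting \(A^{2N-1}=A^{N-1}AA^{N-1}\) for the top degree.

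\textbf{Node location.} The paper uses the classical fact that Gauss nodes of a positive measure lie in the convex hull of its support. That fact is usually proved through the orthogonal-polynomial zero argument. You instead use the Rayleigh quotient \(\lambda_i=(V_Nq_i)^{\mathsf T}A(V_Nq_i)\), which needs only \(J_N=V_N^{\mathsf T}AV_N\) and \(V_N^{\mathsf T}V_N=I\). It does not require identifying the nodes as zeros of orthogonal polynomials.

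\textbf{Positivity of the weights.} This is the most substantive difference. The paper simply writes \(p_i=(e_1^{\mathsf T}q_i)^2>0\), which as stated only gives \(p_i\ge 0\). Your forward-recursion argument for the unreduced tridiagonal \(J_N\) (all \(\beta_k\ne0\) by the no-breakdown hypothesis) shows \(e_1^{\mathsf T}q_i\ne0\). It also shows that the eigenvalues are simple, so the \(p_i\) do not depend on the choice of eigenvectors.

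The paper's version buys brevity. Yours is self-contained, closes that gap, and gives the extra bound \(\min_j\sigma_t(E_j)\le\sigma_{t,i}\le\max_j\sigma_t(E_j)\).

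One small point of presentation: \(\sum_ip_i=e_1^{\mathsf T}QQ^{\mathsf T}e_1=1\) rests on \(Q\) being a square orthogonal matrix, so that \(QQ^{\mathsf T}=I\). Unit norm of the columns alone does not give this. Alternatively, read the identity off exactness at \(\pi\equiv1\).
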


\begin{proof}
Equation \eqref{eq:gauss_exactness} is the classical Lanczos--Golub--Welsch
Gauss quadrature result
\cite{GolubWelsch1969,GolubMeurant2010,Meurant2006LanczosCG,GolubVanLoan2013}.
Since \(J_N\) is real symmetric tridiagonal, its eigenvalues are real. For a
positive measure with support contained in \((0,\infty)\), the Gauss nodes lie
in the convex hull of the support, hence \(\widehat z_i>0\). The quadrature
weights are
\[
p_i=(e_1^{\mathsf T}q_i)^2>0,
\qquad
\sum_{i=1}^{N}p_i
=
e_1^{\mathsf T}QQ^{\mathsf T}e_1
=
1.
\]
Finally, because \(b>0\) and \(\widehat z_i>0\), the mapped levels
\(\sigma_{t,i}=\widehat z_i^{1/b}\) are real and nonnegative.
\end{proof}

\begin{remark}
\label{rem:scope_structure_preserving}
The structure-preserving statement in
Proposition~\ref{prop:gauss_compression} pertains specifically to the
positive-measure compression step, that is, to the reality and nonnegativity of
the compressed quadrature nodes and weights
\(\{(\widehat z_i,p_i)\}_{i=1}^N\), or equivalently of the mapped pairs
\(\{(\sigma_{t,i},p_i)\}_{i=1}^N\). It does not, in general, extend to the
subsequent reconstruction of reaction-channel levels or to componentwise
nonnegativity of the final effective responses.
\end{remark}

If early Lanczos breakdown occurs before step \(N\), then the realized measure
has reached its effective support dimension and an exact lower-order Gauss rule
has already been obtained. Thus early breakdown indicates an exact lower-rank
compression rather than a failure of the measure formulation.

To reconstruct the reaction-channel levels while keeping the same compressed
nodes and weights, let \(\{P_k\}_{k=0}^{N-1}\) denote the orthonormal
polynomials associated with \(\mu_M\), and define the orthogonal-basis
coefficients
\begin{equation}
b_k
=
\int \sigma_x(z)\,P_k(z)\,d\mu_M(z),
\qquad
k=0,\ldots,N-1.
\label{eq:bk_def}
\end{equation}
In the discrete union-grid realization, the Lanczos vectors satisfy
\[
v_k(j)=P_k(z_j)\sqrt{w_j},
\qquad
j=0,\ldots,J,
\]
so \eqref{eq:bk_def} can be evaluated through the stable inner products
\begin{equation}
b_k
=
\sum_{j=0}^{J}\sigma_x(z_j)\,P_k(z_j)\,w_j
=
\sum_{j=0}^{J}\sigma_x(z_j)\,v_k(j)\sqrt{w_j}.
\label{eq:bk_inner_product}
\end{equation}

Once the compressed Gauss nodes and weights
\(\{(\widehat z_i,p_i)\}_{i=1}^N\) have been fixed by
Proposition~\ref{prop:gauss_compression}, the remaining task is to reconstruct
the reduced reaction-channel levels on this fixed compressed rule. We define
\(\{\sigma_{x,i}\}_{i=1}^{N}\) by requiring that the first \(N\)
orthogonal-basis coefficients of the reaction-channel data be matched under the
compressed rule:
\begin{equation}
b_k
=
\sum_{i=1}^{N} p_i\,\sigma_{x,i}\,P_k(\widehat z_i),
\qquad
k=0,\ldots,N-1.
\label{eq:partial_matching}
\end{equation}

\begin{lemma}
\label{lemma:orthogonal_reconstruction}
Let \(J_N=Q\Lambda Q^{\mathsf T}\) be the eigendecomposition in
Proposition~\ref{prop:gauss_compression}, and let the reduced
reaction-channel levels be defined by \eqref{eq:partial_matching}. Then
\begin{equation}
b_k
=
\sum_{i=1}^{N}(Q_{ki}Q_{0i})\,\sigma_{x,i},
\qquad
k=0,\ldots,N-1,
\label{eq:Q_reconstruction}
\end{equation}
or equivalently
\[
Ms=b,
\qquad
M_{ki}=Q_{ki}Q_{0i},
\]
where
\[
s=(\sigma_{x,1},\ldots,\sigma_{x,N})^{\mathsf T},
\qquad
b=(b_0,\ldots,b_{N-1})^{\mathsf T}.
\]
Moreover,
\[
M = Q\,\mathrm{diag}(Q_{01},\ldots,Q_{0N}),
\]
so \(M\) is nonsingular whenever \(p_i=Q_{0i}^2>0\). Hence the reduced
reaction-channel levels are uniquely determined, and the reconstruction can be
carried out entirely in the orthogonal Lanczos basis, without solving a
Vandermonde system on the compressed nodes \(\{\widehat z_i\}\).
\end{lemma}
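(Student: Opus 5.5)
The plan is to rewrite $P_k(\widehat z_i)$ in terms of the eigenvector entries of $J_N$. The key fact is the standard Golub--Welsch eigenvector identity. Let $\mathbf P(z):=(P_0(z),\ldots,P_{N-1}(z))^{\mathsf T}$. Then the three-term recurrence of the orthonormal polynomials associated with $\mu_M$ reads
\[
z\,\mathbf P(z)=J_N\mathbf P(z)+\beta_N P_N(z)\,e_N .
\]
This follows from the Lanczos recurrence together with the identification $v_k(j)=P_k(z_j)\sqrt{w_j}$ stated before \eqref{eq:bk_inner_product}. Both $\{v_k\}$ and $\{P_k(z_j)\sqrt{w_j}\}$ satisfy the same recurrence with the same coefficients $\alpha_k,\beta_k$ and the same starting vector $v_0=(\sqrt{w_j})_j$. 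Since $\mu_M$ is a probability measure, $P_0\equiv 1$.

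The Gauss nodes $\widehat z_i=\lambda_i$ are exactly the zeros of $P_N$, by Proposition~\ref{prop:gauss_compression} and the classical theory. At these nodes the recurrence shows that $\mathbf P(\widehat z_i)$ is an eigenvector of $J_N$ for $\lambda_i$. Because $J_N$ is unreduced (no breakdown, so all $\beta_k>0$), its eigenvalues are simple. Hence $\mathbf P(\widehat z_i)=c_i q_i$ for some scalar $c_i$. Comparing first components, using $P_0=1$, gives $c_i=1/Q_{0i}$. Here $Q_{0i}\neq 0$, since an eigenvector of an unreduced tridiagonal matrix cannot have a vanishing first entry; equivalently $p_i=Q_{0i}^2>0$ by Proposition~\ref{prop:gauss_compression}. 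Therefore $P_k(\widehat z_i)=Q_{ki}/Q_{0i}$.

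Substituting this into \eqref{eq:partial_matching} with $p_i=Q_{0i}^2$ gives
\[
p_iP_k(\widehat z_i)=Q_{0i}^2\,\frac{Q_{ki}}{Q_{0i}}=Q_{ki}Q_{0i},
\]
which is \eqref{eq:Q_reconstruction}. In matrix form $M_{ki}=Q_{ki}Q_{0i}$, which is column $i$ of $Q$ scaled by $Q_{0i}$; that is, $M=Q\,\mathrm{diag}(Q_{01},\ldots,Q_{0N})$.

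For nonsingularity, note that $\det M=\det Q\prod_i Q_{0i}=\pm\prod_i Q_{0i}$, which is nonzero whenever every $p_i=Q_{0i}^2>0$. So $s$ is uniquely determined. Explicitly, $s=\mathrm{diag}(Q_{0i})^{-1}Q^{\mathsf T}b$, i.e.\ $\sigma_{x,i}=(q_i^{\mathsf T}b)/Q_{0i}$, which requires only the orthogonal matrix $Q$ and no Vandermonde solve.

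The main obstacle is the first step: rigorously justifying that the Lanczos vectors coincide with the sampled orthonormal polynomials, and that $P_N(\lambda_i)=0$. I would argue both by induction on the recurrence, checking the inner-product identity \eqref{eq:quad_form_identity} at each step. I would then invoke the standard characteristic-polynomial relation $\det(zI-J_N)=P_N(z)\prod_{k=1}^{N}\beta_k$, citing \cite{GolubWelsch1969} rather than rederiving it.
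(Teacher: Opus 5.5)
Your proposal is correct and follows the paper's proof. Both reduce \eqref{eq:partial_matching} to \eqref{eq:Q_reconstruction} via $p_i=Q_{0i}^2$ and $P_k(\widehat z_i)=Q_{ki}/Q_{0i}$, and then read off nonsingularity from $M=Q\,\mathrm{diag}(Q_{01},\ldots,Q_{0N})$. You go further by justifying the eigenvector identity that the paper only asserts, and by giving the explicit solution $\sigma_{x,i}=(q_i^{\mathsf T}b)/Q_{0i}$. One small tidy-up: derive $Q_{ki}=P_k(\lambda_i)Q_{0i}$ directly from the first $N-1$ rows of $(J_N-\lambda_i I)q_i=0$, which uses only $\beta_1,\ldots,\beta_{N-1}>0$ and so avoids $P_N$ (undefined if $\beta_N=0$).
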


\begin{proof}
For the Gauss rule associated with \(J_N\), one has
\[
p_i=(e_1^{\mathsf T}q_i)^2=Q_{0i}^2,
\qquad
P_k(\widehat z_i)=\frac{Q_{ki}}{Q_{0i}},
\]
hence
\[
p_iP_k(\widehat z_i)=Q_{ki}Q_{0i}.
\]
Substituting this identity into \eqref{eq:partial_matching} gives
\eqref{eq:Q_reconstruction}. In matrix form,
\[
M_{ki}=Q_{ki}Q_{0i},
\qquad
M=Q\,\mathrm{diag}(Q_{01},\ldots,Q_{0N}).
\]
Since \(Q\) is orthogonal and \(Q_{0i}\neq 0\) whenever \(p_i>0\), the matrix
\(M\) is nonsingular, so the reduced reaction-channel levels are uniquely
determined.
\end{proof}

Because \(\{P_k\}_{k=0}^{N-1}\) forms an orthonormal basis of
\(\mathbb P_{N-1}\), the coefficient-matching relation
\eqref{eq:partial_matching} immediately extends to every polynomial
\(q\in\mathbb P_{N-1}\). Writing
\[
q(z)=\sum_{k=0}^{N-1} c_k P_k(z),
\]
and applying \eqref{eq:partial_matching} termwise gives
\[
\int q(z)\,\sigma_x(z)\,d\mu_M(z)
=
\sum_{i=1}^N p_i \sigma_{x,i}\, q(\widehat z_i).
\]
In particular, taking \(q(z)=z^m\) for \(m=0,\dots,N-1\) yields exact matching
of the first \(N\) transformed mixed moments on the realized measure \(\mu_M\).

\begin{remark}
\label{rem:mixed_moment_vs_positivity}
Lemma~\ref{lemma:orthogonal_reconstruction} characterizes the coefficient-matching
relations preserved by the reconstruction, but it does not by itself imply
componentwise nonnegativity of the reconstructed reaction-channel levels
\(\{\sigma_{x,i}\}_{i=1}^{N}\). Accordingly, the present mixed-moment
reconstruction does not furnish a structural guarantee that the resulting
effective cross sections remain in \(\mathbb R_{\ge 0}\) over the target
dilution set \(\mathcal S\).

This distinction reflects the different structures of the two reconstruction
steps. The total subgroup levels and probabilities arise from a
positive-measure compression problem, for which the underlying scalar Gauss
construction naturally preserves reality and nonnegativity. By contrast, the reaction-channel levels are recovered afterwards on the
fixed compressed nodes by orthogonal-basis matching under Chiba's affine-order conditions. In general, exact low-order fidelity to
such mixed moments is neither equivalent to nor necessarily compatible with
componentwise nonnegativity of the reconstructed channel levels.

\end{remark}

\subsection{Reorthogonalization}
\label{subsec:finite_precision_reorth}

The preceding subsection describes the Gauss compression and subgroup
reconstruction in exact arithmetic. In finite precision, however, the three-term
Lanczos recurrence does not preserve exact orthogonality indefinitely; the
resulting loss of orthogonality may produce duplicated or spurious Ritz values
and perturb the projected tridiagonal matrix \(J_N\) \cite{MeurantStrakos2006}.
The compressed nodes and weights extracted by the Golub--Welsch procedure are
therefore perturbed as well, so the practical quality of the compressed rule
depends not only on the realized measure \(\mu_M\), but also on how well
orthogonality is maintained during the Lanczos process.
We consider three reorthogonalization strategies:
\begin{itemize}
\item \textbf{No reorthogonalization}: the three-term recurrence is used
without any explicit correction;
\item \textbf{Selective reorthogonalization}: converged Ritz directions are
periodically identified and locked, while loss of semiorthogonality triggers a
safeguard reorthogonalization step;
\item \textbf{Full reorthogonalization}: each new Lanczos vector is explicitly
reorthogonalized against the current basis.
\end{itemize}

These strategies represent the standard trade-off between recurrence cost and
orthogonality control in finite-precision Lanczos implementations.

The selective strategy used here is a practical hybrid implemented in the spirit
of Parlett--Scott selective orthogonalization, augmented by a semiorthogonality
monitor in the spirit of partial reorthogonalization
\cite{ParlettScott1979,Simon1984}. Converged Ritz directions of the current
projection are periodically identified through residual estimates and then
locked, so that subsequent residuals are kept orthogonal to the locked
subspace. In addition, the semiorthogonality indicator
\[
\mu_k=\max_{i\le k}\frac{|q_i^{\mathsf T}r_k|}{\|r_k\|_2}
\]
is monitored, where \(r_k\) denotes the current Lanczos residual. When
\(\mu_k\) exceeds a prescribed tolerance, a full reorthogonalization step
against the current Lanczos basis is triggered as a safeguard. This hybrid
strategy is intended to suppress the dominant orthogonality loss while
retaining most of the efficiency of the unreorthogonalized recurrence. Its
effect on orthogonality preservation is examined in Section~\ref{sec:results}; the
corresponding results are reported in Figure~\ref{fig:orth-off-by-group}.

\section{Results}
\label{sec:results}

Numerical tests were performed using pointwise cross sections reconstructed
from the ENDF/B-VIII.1 evaluation for the tested cases, with the total cross
section \(\sigma_t(E)\) and the corresponding reaction-channel cross section
\(\sigma_x(E)\) taken as inputs for each case. Two multigroup discretizations,
namely a fine-group structure and a coarse-group structure, were considered in
order to assess sensitivity to the underlying energy partition. For each energy
group, the conventional moment--Pad\'e construction and the proposed
Lanczos--Golub--Welsch construction were compared under the same Chiba
affine-order prescription and otherwise identical settings. The observed
difference therefore reflects the behavior of two finite-precision construction
routes for the same prescription. The reference effective cross section was
evaluated under the narrow-resonance approximation on the same pointwise data
over a logarithmic dilution grid \(\mathcal S\) spanning
\(\sigma_0 \in [10^{-1},10^{6}]\) barn. Unless otherwise noted, the detailed
plots and groupwise discussion below focus on the representative
\(^{238}\mathrm{U}\) capture case, while cross-channel comparisons are
summarized separately.

Groupwise accuracy is measured by the 95th-percentile error \(E_g^{0.95}\) and
the worst-case error \(E_g^{\max}\). The discussion below uses
\(E_g^{0.95}\) as the primary groupwise summary of the observable error over
\(\mathcal S\). In addition to accuracy, two finite-precision nonnegative-real
diagnostics are tracked. At the subgroup level, each energy group is classified
according to whether the reconstructed total subgroup levels and probabilities
are nonnegative real, negative real, or complex. At the response level, we
record whether the reconstructed effective cross sections remain in
\(\mathbb R_{\ge 0}\) over \(\mathcal S\).

\subsection{Effects of reorthogonalization}

Figure~\ref{fig:orth-off-by-group} compares the pairwise orthogonality-loss
indicator \(\max_{i\neq j}|q_i^{\mathsf T}q_j|\) for the three
reorthogonalization strategies for the representative case \(N=50\).
Without reorthogonalization, the loss is strongly group dependent and exhibits
several pronounced spikes, indicating substantial degradation of the Lanczos
basis in a subset of groups. Selective reorthogonalization suppresses these
excursions markedly, whereas full reorthogonalization yields the smallest and
most uniform values over the full group range. The observed ordering is thus
\[
\text{full reorthogonalization}
\;>\;
\text{selective reorthogonalization}
\;>\;
\text{no reorthogonalization},
\]
with respect to orthogonality preservation.

At the level of the final effective-cross-section errors, however, the
differences among the three strategies remain small in the present
experiments. Reorthogonalization therefore changes the internal stability
profile much more visibly than the final observable error. 

A clearer separation appears in the runtime behavior shown in
Figure~\ref{fig:reorth_ratio_vs_M}, which plots the binned ratio
\(t_{\mathrm{sel}}/t_{\mathrm{full}}\) against the ENDF tabulation-point count
\(M\). For \(N=9\), the binned timing trend does not show a clear and
persistent regime favoring selective reorthogonalization. Although some
larger-\(M\) bins fall below \(t_{\mathrm{sel}}/t_{\mathrm{full}}=1\), the
reduction is limited and the pattern is less coherent than in the higher-order
cases. Since the final effective-cross-section errors remain close across the
tested strategies, while full reorthogonalization provides the strongest
orthogonality control, \(N=9\) is treated conservatively using full
reorthogonalization.

For \(N=30\) and \(50\), the binned trends are more structured. Below roughly
\(M\approx 5\times 10^3\), selective reorthogonalization does not show a
stable cost advantage. Above this level, the subsequent bins remain below
unity in their central trend, with the interquartile ranges also concentrated
near or below unity. This identifies a persistent runtime advantage of
selective reorthogonalization in the larger-\(M\) regime. Accordingly,
\(M\approx 5000\) serves as an empirical switching point for the higher-order
cases considered here.

The practical strategy adopted here is therefore as follows:
no reorthogonalization is excluded as a default option;
for \(N=9\), full reorthogonalization is used;
for \(N=30\) and \(50\), full reorthogonalization is retained for
\(M<5000\), whereas selective reorthogonalization is preferred for
\(M\ge 5000\).
This threshold is empirical for the present implementation and dataset.

\begin{figure}[tbp]
  \centering
  \includegraphics[width=0.82\linewidth]{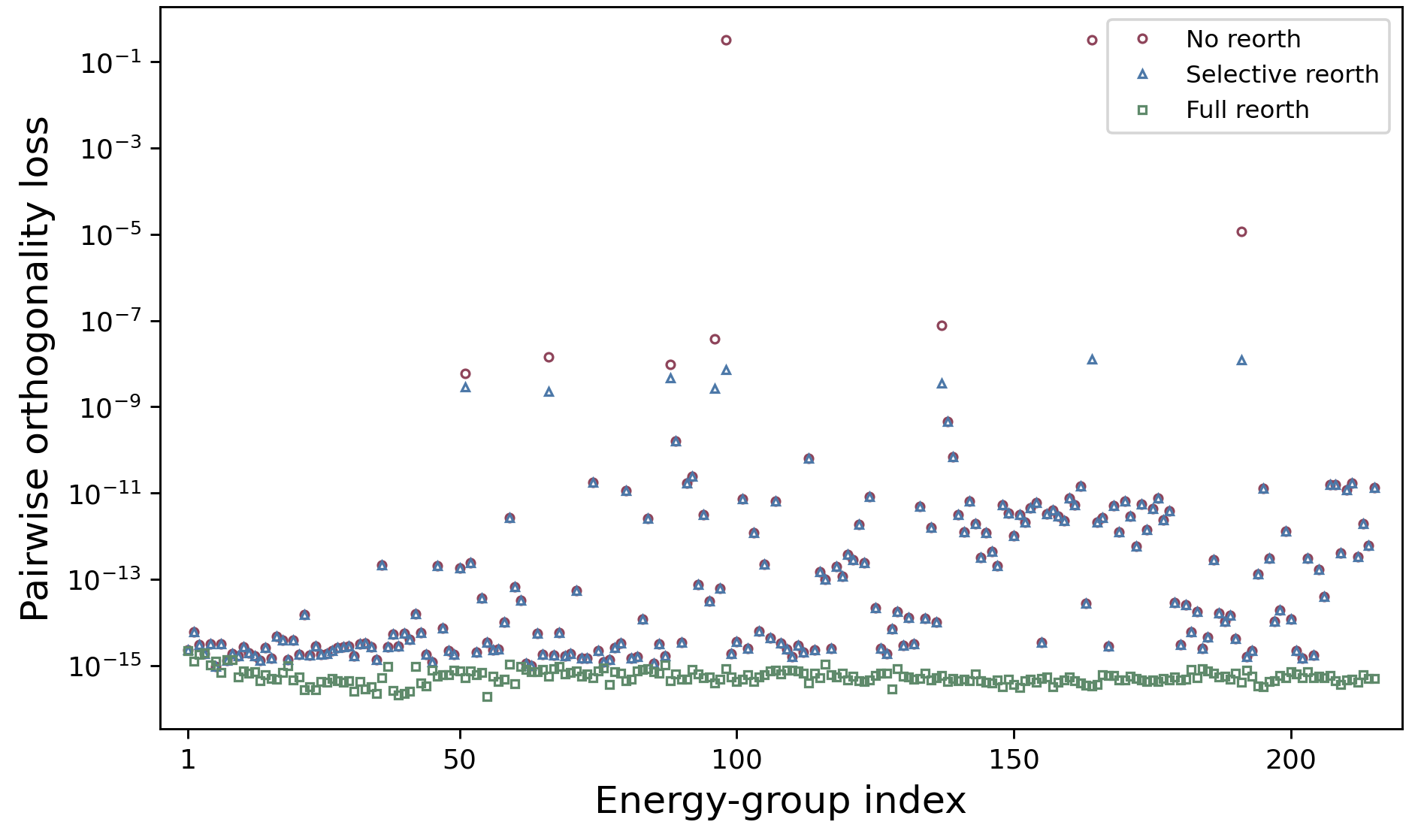}
  \caption{
    Pairwise orthogonality loss of the Lanczos basis versus energy-group index
    for the three reorthogonalization strategies for the representative case
    \(N=50\).
    For each energy group, the metric shown is
    \(\max_{i\neq j}|q_i^{\mathsf T}q_j|\), i.e., the maximum off-diagonal
    inner product among the computed Lanczos vectors.
    The no-reorthogonalization case exhibits several pronounced spikes,
    selective reorthogonalization suppresses these excursions markedly, and
    full reorthogonalization yields the smallest and most uniform values over
    the full group range.
  }
  \label{fig:orth-off-by-group}
\end{figure}

\begin{figure}[tbp]
\centering
\includegraphics[width=0.88\linewidth]{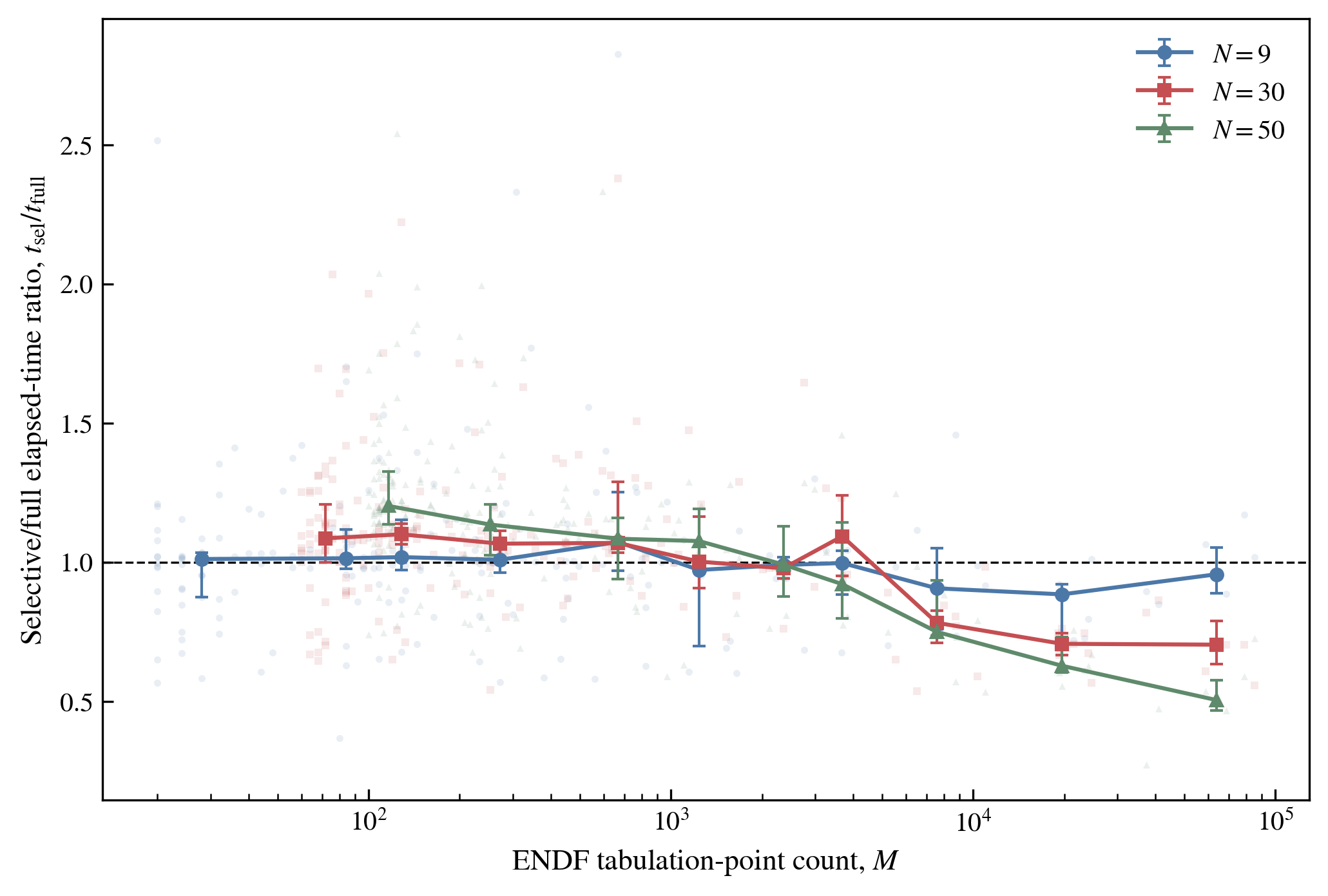}
\caption{
  Size-resolved runtime comparison between selective and full
  reorthogonalization.
  The figure shows the binned ratio
  \(t_{\mathrm{sel}}/t_{\mathrm{full}}\) as a function of the ENDF
  tabulation-point count \(M\) for \(N=9\), \(30\), and \(50\).
  Faint points represent individual energy groups, while the solid markers and
  error bars denote the binwise median and interquartile range.
  For \(N=9\), the timing difference is too small to justify replacing full
  reorthogonalization.
  For \(N=30\) and \(50\), a clearer advantage of selective
  reorthogonalization emerges once \(M\) reaches about \(5000\).
}
\label{fig:reorth_ratio_vs_M}
\end{figure}

\subsection{Accuracy, numerical nonnegative-real behavior, and order scaling}

We now compare the conventional moment--Pad\'e construction with the proposed
Lanczos--Golub--Welsch construction in terms of accuracy, order dependence, and
finite-precision nonnegative-real behavior.

Figure~\ref{fig:analytic-q95-multiN} reports the per-group 95th-percentile
relative error \(E_g^{0.95}\) under the segmentwise analytic reference.
For the three representative orders \(N=9\), \(30\), and \(50\), the proposed
method lies below the conventional one in essentially all groups. This ordering
is stable from moderate to high reconstruction order. At the same time, the
proposed-method errors are concentrated in a relatively narrow band, typically
around \(10^{-13}\) to \(10^{-12}\), and the visual difference between
\(N=30\) and \(N=50\) is limited. This indicates that, once the compression
error has been sufficiently reduced, the remaining observable discrepancy is
increasingly governed by the realization of the reference measure prior to
compression.

\begin{figure}[p]
  \centering
  \captionsetup{font=small}
  \includegraphics[width=0.82\linewidth]{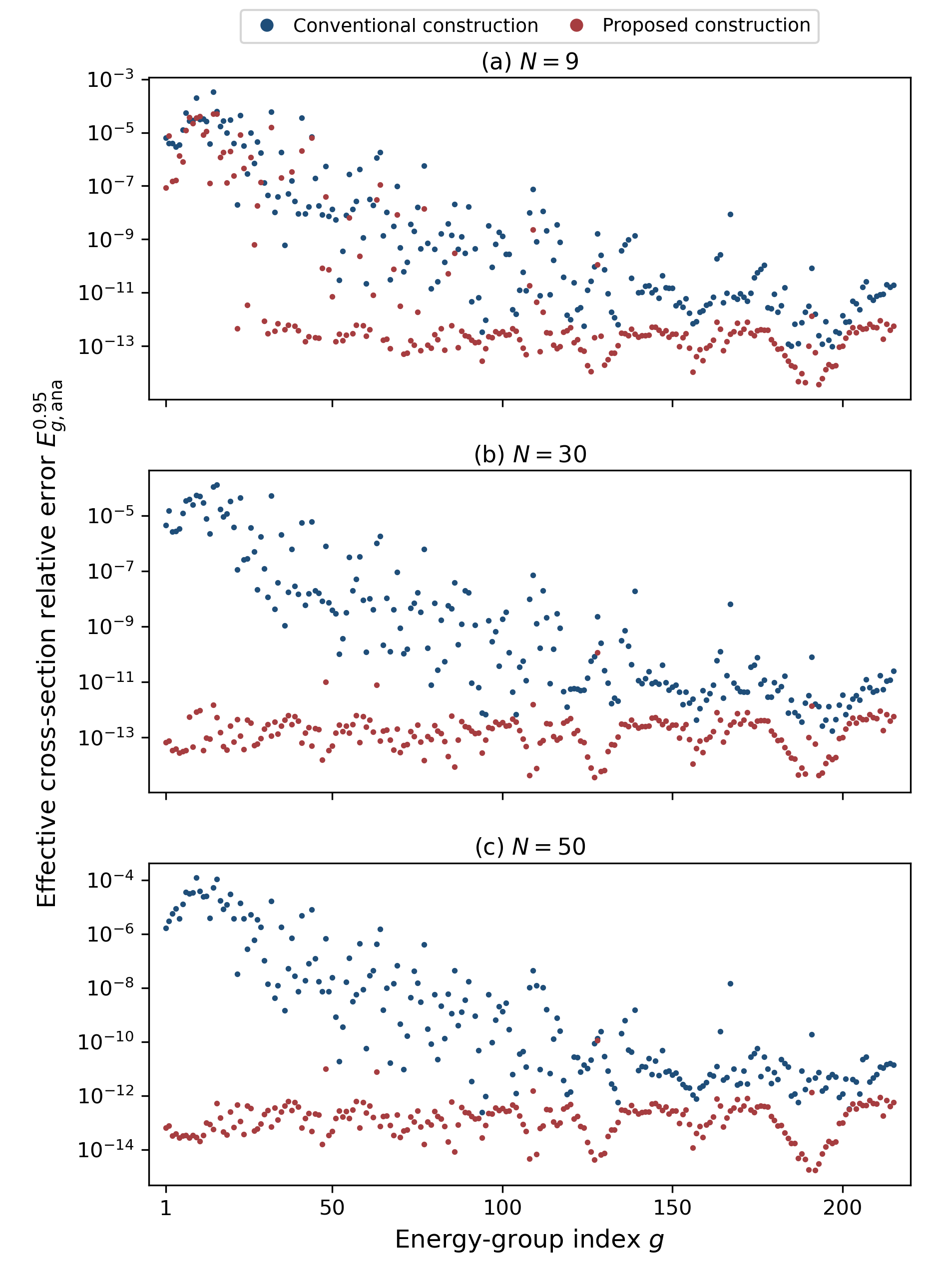}
  \caption[Per-group 95th-percentile relative error under the segmentwise analytic reference]{
    Per-group 95th-percentile relative error of the effective cross section
    for the conventional and proposed constructions, evaluated against the
    segmentwise analytic reference.
    Panels (a)--(c) show the results for \(N=9\), \(N=30\), and \(N=50\),
    respectively.
    Smaller values indicate better agreement with the segmentwise analytic
    reference.
  }
  \label{fig:analytic-q95-multiN}
\end{figure}

This transition is resolved directly in
Figure~\ref{fig:unc-comp-analytic}, which tests the decomposition
\eqref{eq:error_split} at the observable level.
The total discrepancy is separated into a pre-compression realization error and
an additional compression-induced error. At low order, most clearly for
\(N=9\), the compression-induced contribution remains elevated in a subset of
groups and can exceed the pre-compression term. The low-order regime is
therefore still partly compression limited. By \(N=30\) and \(N=50\), however,
the compression-induced contribution is smaller than the pre-compression error
in most groups, typically by one or more orders of magnitude. The moderate- and
high-order regime is therefore realization dominated: once the order is large
enough, the total error under the segmentwise analytic reference is governed
primarily by the baseline realization error, while the additional error
introduced by the Lanczos--Golub--Welsch reduction becomes secondary.

\begin{figure}[p]
  \centering
  \captionsetup{font=small}
  \includegraphics[width=0.82\linewidth]{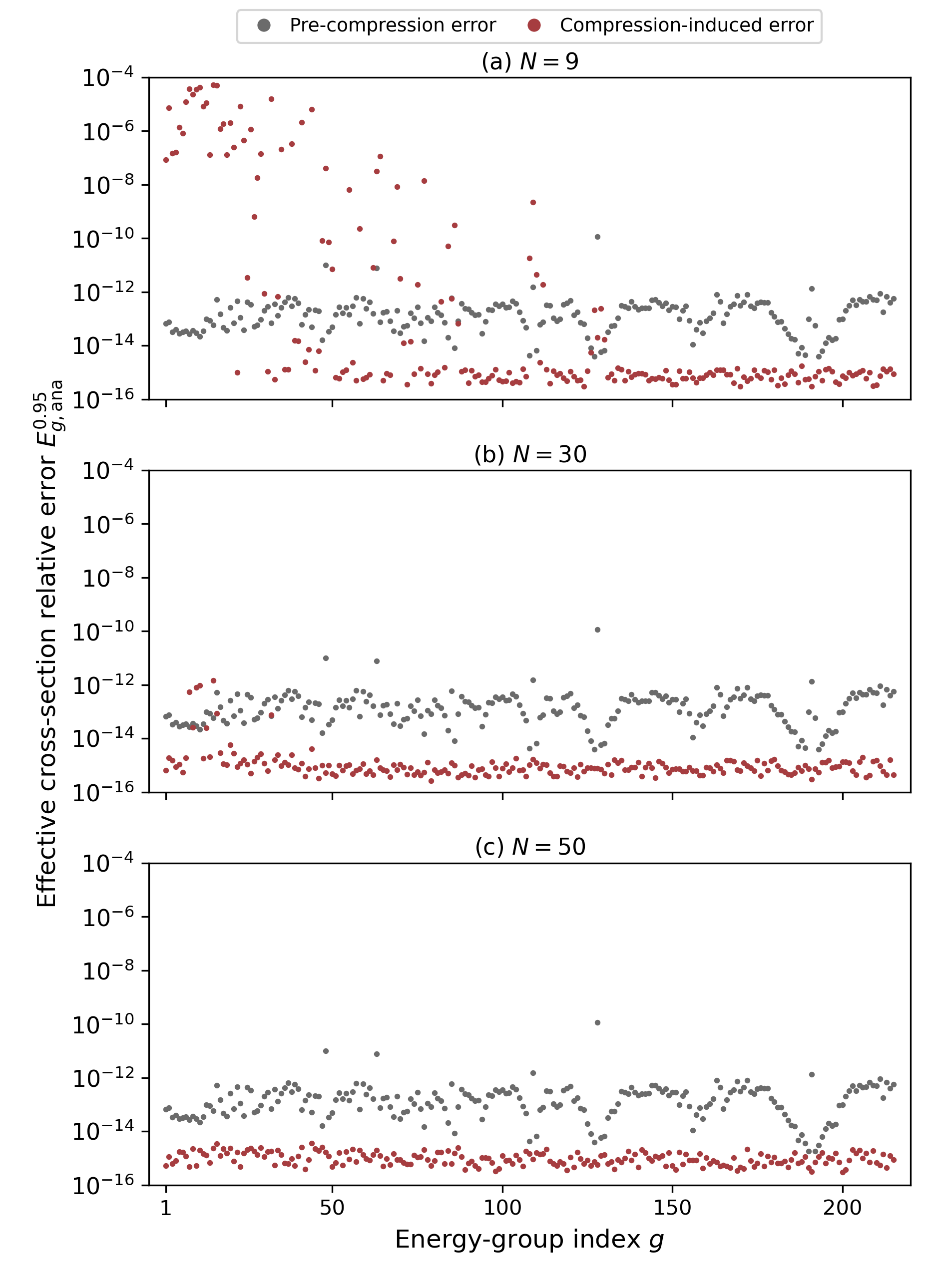}
  \caption{
    Comparison of the uncompressed error and the compression-induced error of
    the proposed construction, evaluated against the segmentwise analytic reference.
    The former reflects the baseline numerical error before compression,
    whereas the latter quantifies the additional error introduced by the
    compression procedure.
    Smaller values indicate better agreement with the segmentwise analytic
    reference.
  }
  \label{fig:unc-comp-analytic}
\end{figure}

Figure~\ref{fig:median-error-ratio-vs-N} summarizes the same regime transition
in a compact order-scaling diagnostic. For each order \(N\), the plotted
quantity is the median over energy groups of the ratio of compression-induced
error to pre-compression realization error. The ratio decreases rapidly with
\(N\), crosses the unity threshold at \(N=26\), and remains below unity for
all tested orders up to \(N=50\). This marks the transition from a
compression-limited regime to a realization-dominated regime under the
segmentwise analytic realization. Beyond the crossover, increasing \(N\) no
longer materially reduces the total observable error, because the dominant term
has already shifted to the pre-compression realization error.

\begin{figure}[htbp]
  \centering
  \includegraphics[width=0.72\linewidth]{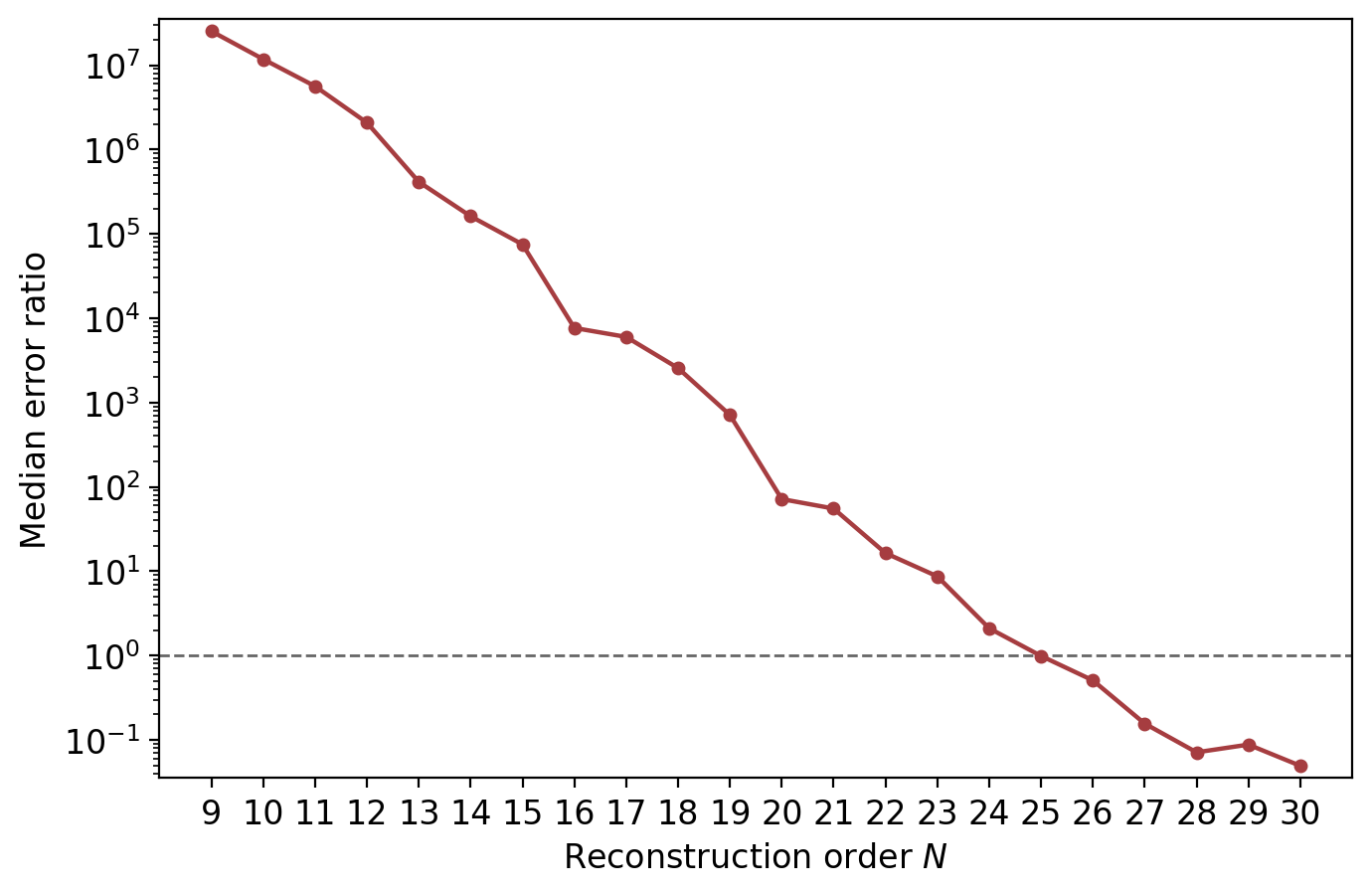}
  \caption{
    Median error ratio versus reconstruction order \(N\).
    For each order \(N\), the plotted value is the median over energy groups
    of the ratio of compression-induced error to pre-compression realization
    error.
    The horizontal line at unity marks the transition between
    compression-dominated and realization-dominated behavior.
    The ratio decreases rapidly with \(N\) and falls below unity at high
    order, indicating a transition from a compression-limited regime to a
    realization-dominated regime.
  }
  \label{fig:median-error-ratio-vs-N}
\end{figure}

Accuracy alone does not capture the main finite-precision contrast between the
two constructions, namely whether the reconstructed subgroup variables remain
nonnegative real.
Figure~\ref{fig:subgroup-nonnegative-real-selective} therefore reports the
groupwise numerical classification of the reconstructed total subgroup levels
and probabilities under the segmentwise analytic realization.
For the proposed Lanczos--Golub--Welsch construction, all groups are classified
as nonnegative real in all six panels, covering both group structures and the
three representative orders \(N=5\), \(N=9\), and \(N=12\).
By contrast, the conventional moment--Pad\'e construction already exhibits
negative-real and complex groups at low order, especially in the fine-group
structure at \(N=5\), and by \(N=9\) shows a clear mixture of nonnegative-real,
negative-real, and complex cases. At \(N=12\), the conventional construction
becomes complex in all displayed groups for both group structures. Thus, at the
subgroup level, the proposed construction remains numerically nonnegative real
throughout the displayed range, whereas the conventional construction undergoes
a rapid order-dependent breakdown.

\begin{figure}[p]
  \centering
  \captionsetup{font=small}
  \includegraphics[width=0.95\linewidth]{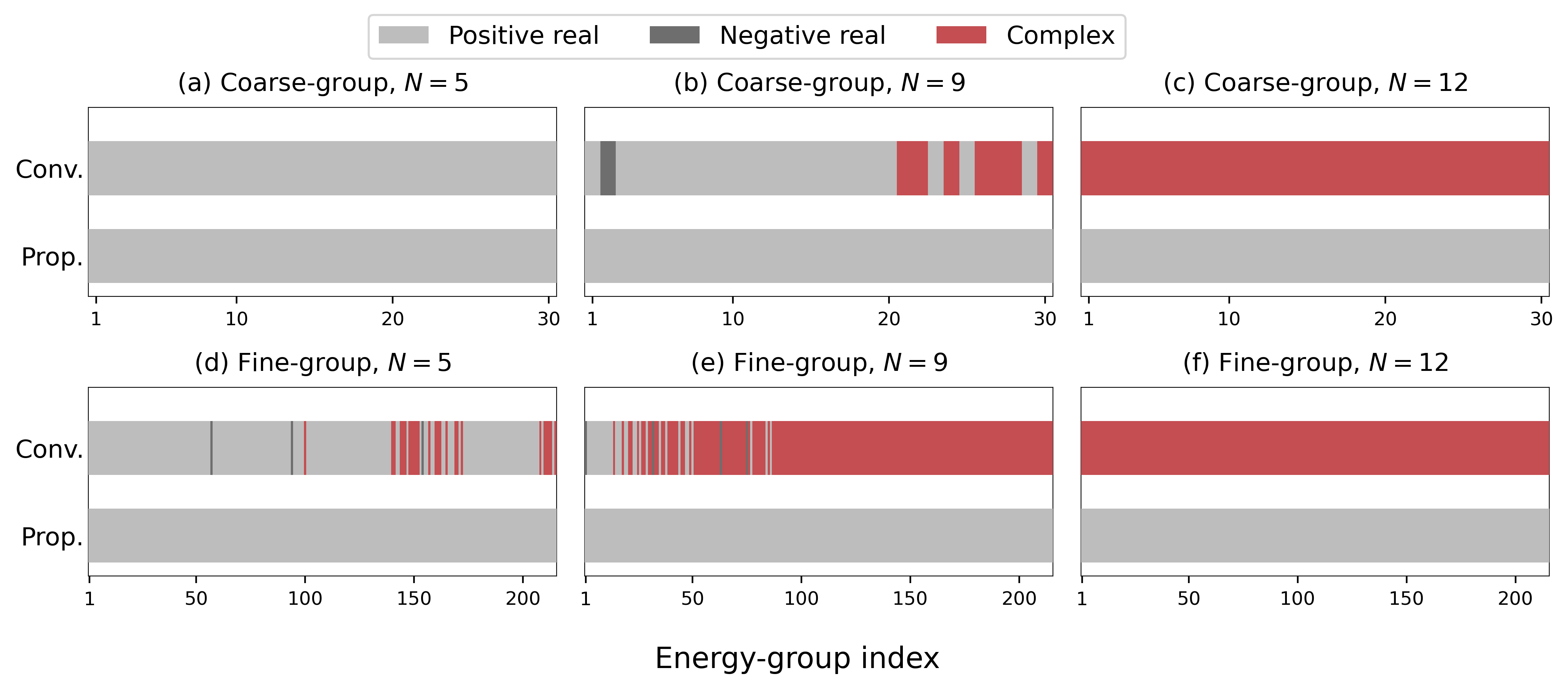}
  \caption{
    Group-wise numerical classification of the reconstructed total subgroup
    levels and probabilities for the conventional and proposed constructions
    under the segmentwise analytic realization.
    The top row corresponds to the coarse-group structure and the bottom row to
    the fine-group structure; panels (a)--(c) and (d)--(f) show the results
    for \(N=5\), \(N=9\), and \(N=12\), respectively.
    Each energy group is assigned to one of three mutually exclusive
    categories: nonnegative real, complex, and negative real, according to the
    reconstructed total subgroup levels and probabilities in that group.
    The proposed construction remains nonnegative real in all displayed groups,
    whereas the conventional construction exhibits a rapid order-dependent
    breakdown and is already complex in all groups by \(N=12\).
  }
  \label{fig:subgroup-nonnegative-real-selective}
\end{figure}

A complementary response-level summary is shown in
Figure~\ref{fig:xg-violation}, which reports the fraction of energy groups
whose reconstructed effective cross sections violate nonnegative-realness over
the target dilution set \(\mathcal S\), for both the conventional
moment--Pad\'e construction and the proposed construction under trapezoidal
and analytic realizations.
For the proposed construction, this violation fraction remains identically
zero over the entire tested order range up to \(N=50\).
By contrast, for the conventional construction, the violation fraction
increases rapidly with reconstruction order, reaches unity at \(N=12\), and
remains equal to unity for all higher tested orders.
The analytic realization reduces the violation fraction relative to the
trapezoidal realization before \(N=12\), but the same complete breakdown still
appears at \(N=12\).
Thus, for the conventional pipeline, increasing the reconstruction order does
not produce regular high-order improvement at the response level; instead, the
high-order regime is dominated by complete nonnegative-real breakdown of the
effective cross sections.

\begin{figure}[htbp]
  \centering
  \includegraphics[width=0.8\linewidth]{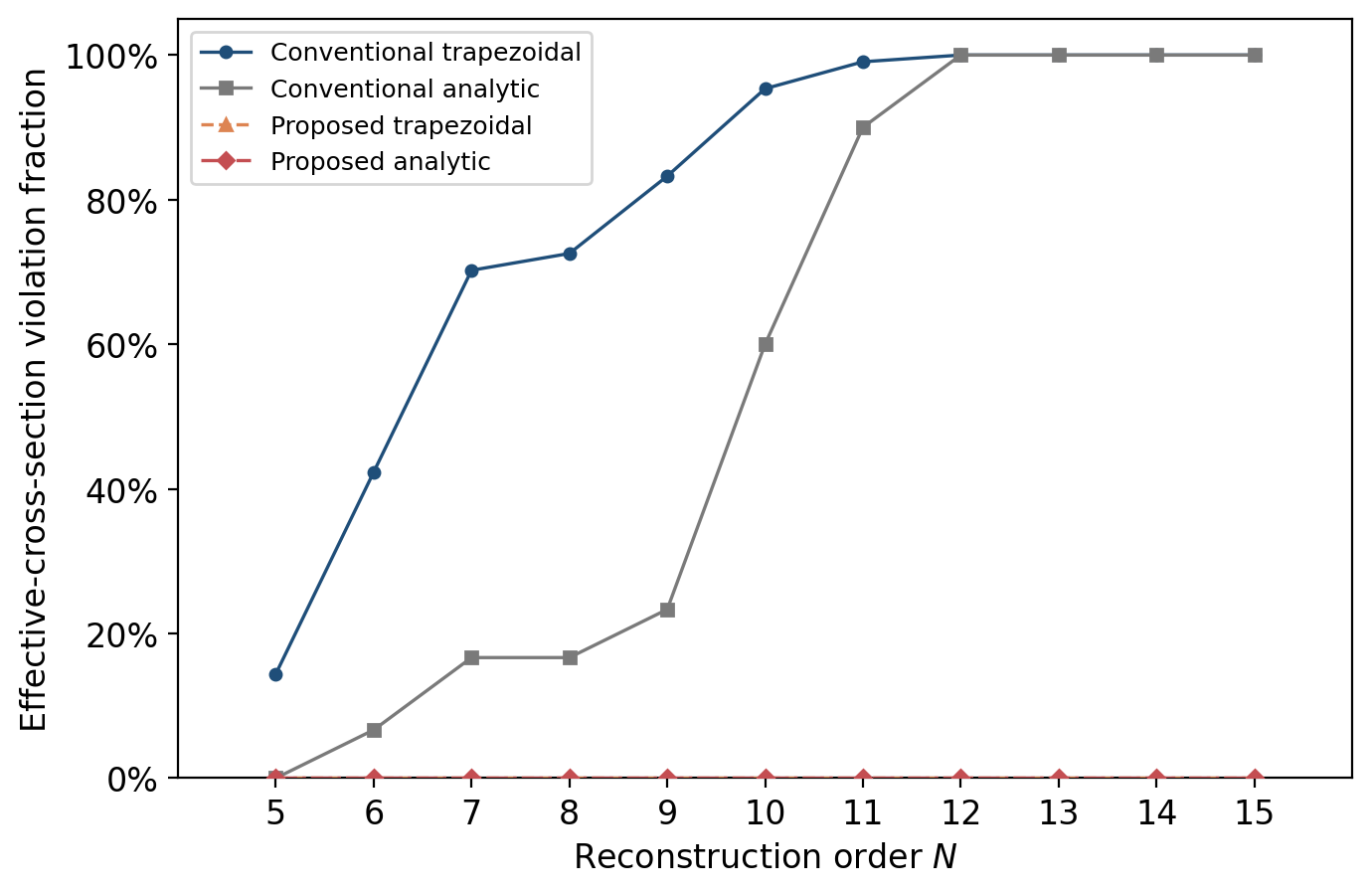}
  \caption[Effective-cross-section nonnegative-real violation fraction versus reconstruction order]{
    Fraction of energy groups for which the reconstructed effective cross
    sections violate nonnegative-realness over the target dilution set,
    plotted against the reconstruction order \(N\).
    Results are shown for both the conventional construction and the proposed construction, each under trapezoidal and analytic
    realizations.
    For the proposed construction, the violation fraction remains identically
    zero over the tested order range, whereas for the conventional
    construction it increases rapidly with \(N\) and reaches unity at high
    order.
  }
  \label{fig:xg-violation}
\end{figure}

Taken together with Figure~\ref{fig:median-error-ratio-vs-N}, these results
show a clear change in the role of subgroup order for the proposed
construction.
At low order, the effective-cross-section error still contains a visible
compression-induced contribution, so increasing \(N\) improves the observable
mainly through better compression.
Once the order is sufficiently high, however, the compression-induced term
becomes subdominant and the remaining observable error is governed primarily by
the pre-compression realization error.
In this higher-order regime, the principal gain from the proposed construction
is numerical robustness rather than continued reduction of the total response
error: it reaches the realization-limited accuracy level while the computed
effective cross sections remain numerically nonnegative real throughout the
tested range up to \(N=50\).
By contrast, the conventional moment--Pad\'e construction exhibits
order-induced numerical breakdown both at the subgroup-variable level and at
the level of the reconstructed effective responses. The present result therefore
establishes a route-level finite-precision advantage rather than a full
stability theorem.

\subsection{Cross-channel tests}
\label{subsec:five_case_summary}

\begin{figure}[htbp]
  \centering
  \includegraphics[width=0.92\linewidth]{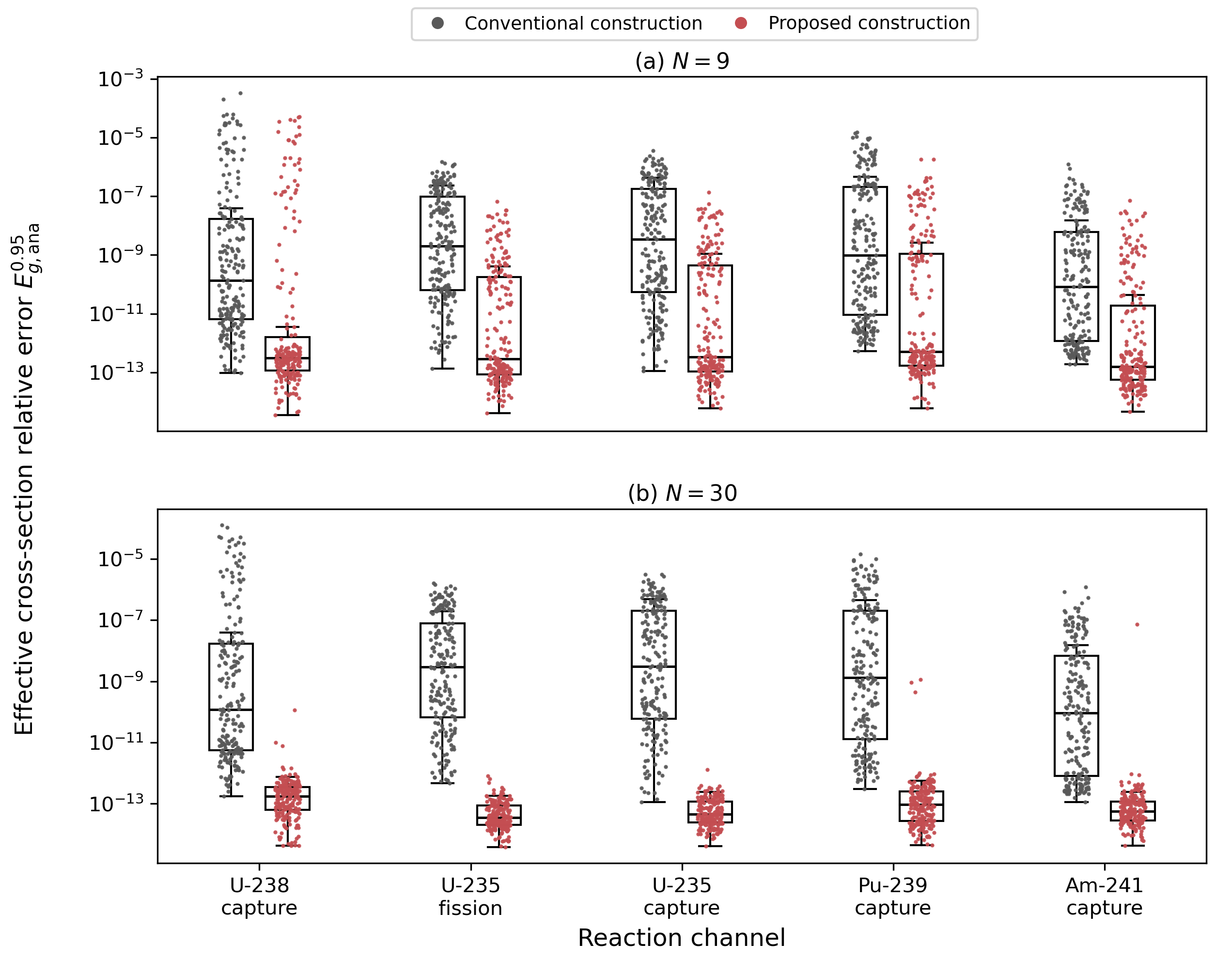}
  \caption{
    Groupwise distributions of effective-cross-section relative errors for five
    tested nuclide--reaction cases at subgroup orders \(N=9\) and \(N=30\):
    \(^{238}\mathrm{U}\) capture, \(^{235}\mathrm{U}\) fission,
    \(^{235}\mathrm{U}\) capture, \(^{239}\mathrm{Pu}\) capture, and
    \(^{241}\mathrm{Am}\) capture. In each pair, the left boxplot denotes the
    conventional construction and the right boxplot denotes the proposed
    construction. The same qualitative finite-precision contrast observed in the
    detailed \(^{238}\mathrm{U}\) capture study is also seen across these five
    tested channels: the proposed construction consistently produces markedly
    smaller error distributions than the conventional route.
  }
  \label{fig:five-cases-error}
\end{figure}

To examine whether the finite-precision contrast identified in the detailed
\(^{238}\mathrm{U}\) capture study reflects a broader cross-channel pattern, we
further considered four additional resonance-channel cases:
\(^{235}\mathrm{U}\) fission, \(^{235}\mathrm{U}\) capture,
\(^{239}\mathrm{Pu}\) capture, and \(^{241}\mathrm{Am}\) capture.
Figures~\ref{fig:five-cases-error} and
\ref{fig:five-cases-response-character} summarize the groupwise response
behavior for the resulting five-case benchmark at \(N=9\) and \(N=30\).

These five-case results show that the qualitative advantage observed for the
representative \(^{238}\mathrm{U}\) capture case persists across the full
benchmark. The proposed construction yields markedly smaller
effective-cross-section error distributions, and nonnegative-real effective
responses are obtained in all five tested channels at both displayed orders.

\begin{figure}[htbp]
  \centering
  \includegraphics[width=0.92\linewidth]{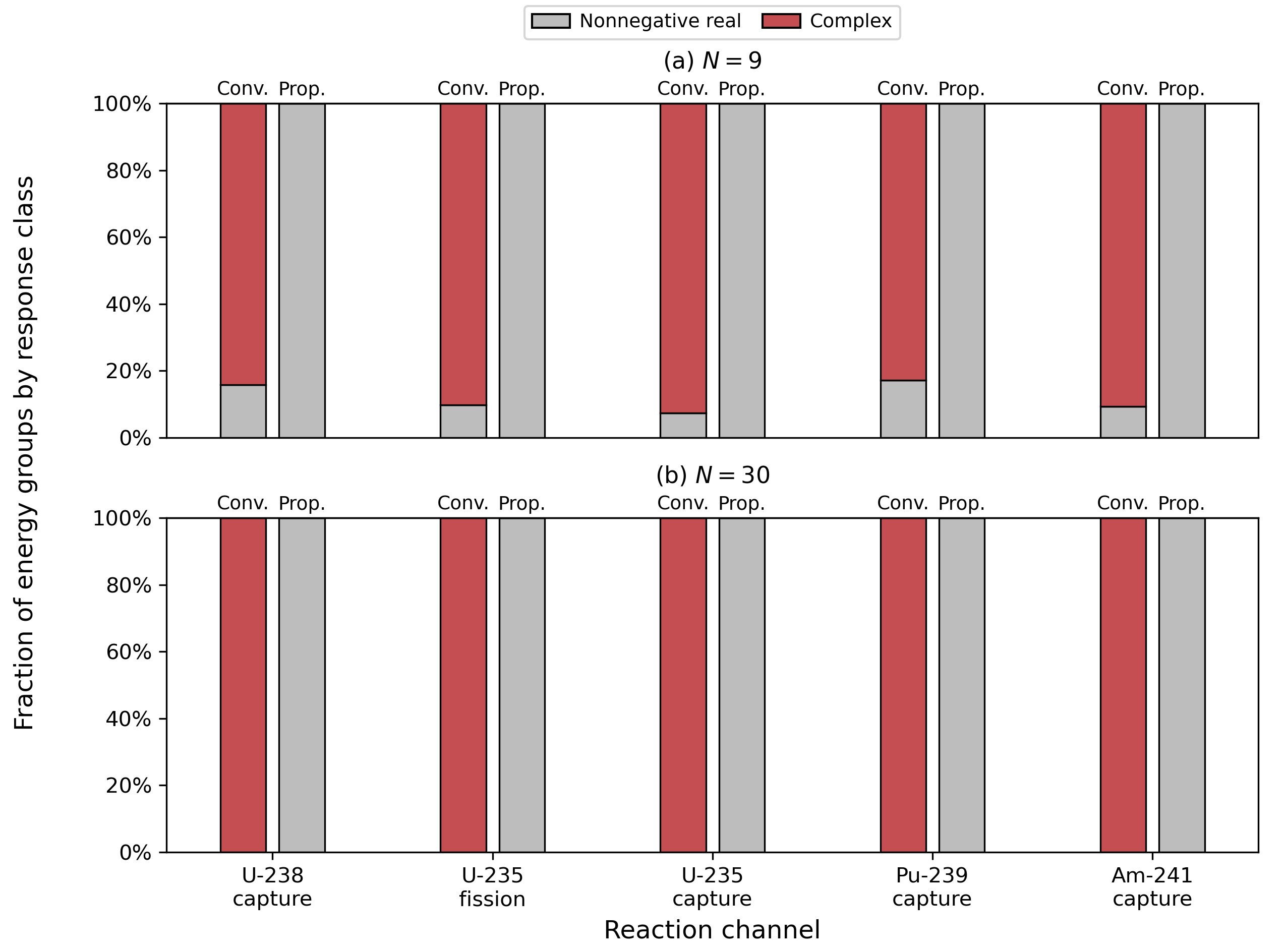}
  \caption{
    Cross-channel summary of the character of the computed effective responses
    for five tested nuclide--reaction cases at subgroup orders \(N=9\) and
    \(N=30\): \(^{238}\mathrm{U}\) capture, \(^{235}\mathrm{U}\) fission,
    \(^{235}\mathrm{U}\) capture, \(^{239}\mathrm{Pu}\) capture, and
    \(^{241}\mathrm{Am}\) capture. For each reaction channel, the left stacked
    bar corresponds to the conventional moment--Pad\'e construction and the
    right stacked bar to the proposed construction. Each bar is normalized to
    unity and partitioned according to the character of the computed effective
    cross section: nonnegative real or complex. Across all five tested cases,
    the proposed construction is classified in the nonnegative-real response
    class at both \(N=9\) and \(N=30\), whereas the conventional construction
    shows a substantial fraction of nonphysical responses, especially at the
    higher order. No real-negative effective responses are observed in these
    five-case summary tests.
  }
  \label{fig:five-cases-response-character}
\end{figure}

\section{Conclusion}
\label{sec5}

This work reformulated probability-table construction in resonance
self-shielding as the compression of a positive measure in transformed cross-section space, under which the physically motivated Chiba affine moments are represented as polynomial moments of a positive transformed measure. On this basis, a new construction route was developed for this
problem, based on discrete measure realization, symmetric Lanczos reduction, and Golub--Welsch extraction, in place of the conventional
moment--Pad\'e pipeline. From the finite-precision viewpoint, the difference between the two
constructions is route-level: the conventional construction proceeds through
explicit moment inversion, Pad\'e root--residue recovery, and a
Vandermonde-type mixed-moment solve, whereas the proposed construction
proceeds through positive-measure realization, symmetric tridiagonal spectral
extraction, and orthogonal-basis channel reconstruction, while retaining the
Gauss-quadrature structure of the underlying approximation problem.
Although this does not constitute a full backward-stability theorem, nor extend
the structure-preserving property beyond the positive-measure compression step,
it does establish a distinct finite-precision construction route for
probability-table approximation in the present setting.

Numerical results show that the conclusions drawn from the representative
$^{238}\mathrm{U}$ capture case extend consistently to the broader five-case
benchmark: the proposed construction improves effective-cross-section accuracy
over the practically relevant low- and moderate-order range and retains a
substantial robustness advantage once the conventional moment--Pad\'e pipeline
enters its order-induced breakdown regime. The error-decomposition study
further shows that the role of compression is realization dependent. Under the
segmentwise analytic realization, the compression-induced contribution is still
significant at low order, but decreases rapidly with increasing order and
eventually becomes secondary to the pre-compression realization error. Thus,
under the analytic realization, the high-order regime should be interpreted not
as one of indefinite further response-error reduction, but as one in which the
compression error has been driven below the realization-imposed floor. Under
the trapezoidal realization, by contrast, the observable error remains
predominantly compression induced over the tested order range.

A further main conclusion is that the proposed construction exhibits a twofold numerical advantage over the conventional moment--Pad\'e construction. First, for the positive-measure compression step, exact arithmetic guarantees
real nonnegative compressed nodes and weights, and hence real nonnegative
mapped total subgroup levels together with nonnegative subgroup probabilities.
In the present finite-precision tests, these total subgroup quantities also
remained numerically nonnegative real throughout the tested order range,
whereas the conventional route could lose this property under reconstruction. Second, although the reaction-channel reconstruction does not furnish a
structural nonnegativity guarantee, the computed effective cross sections were
numerically nonnegative real in the tested cases; by contrast, the conventional
construction exhibited nonphysical response behavior once finite-precision
breakdown set in.

Overall, the present work reformulates probability-table construction in
resonance self-shielding as a transformed positive-measure compression problem and,
on this basis, establishes an alternative construction route
that is shown numerically to be substantially more robust than the conventional moment--Pad\'e pipeline.

\bibliographystyle{elsarticle-num}
\bibliography{GQC_refs}

@article{LiZhangLiuZhangHaoWangJiangLiu2023FineMeshSubgroup,
  author  = {Li, Song and Zhang, Qian and Liu, Lei and Zhang, Yongfa and Hao, Jianli and Wang, Xiaolong and Jiang, Lizhi and Liu, Xiaoya},
  title   = {Analysis of the fine-mesh subgroup method and its feasible improvement},
  journal = {Frontiers in Energy Research},
  year    = {2023},
  volume  = {10},
  pages   = {1036063},
}

@article{Gautschi1982GeneratingOrthogonalPolynomials,
  author  = {Gautschi, Walter},
  title   = {On Generating Orthogonal Polynomials},
  journal = {SIAM Journal on Scientific and Statistical Computing},
  year    = {1982},
  volume  = {3},
  number  = {3},
  pages   = {289--317},
}

@article{AptekarevBuslaevMartinezFinkelshteinSuetin2011Pade,
  author  = {Aptekarev, Alexander I. and Buslaev, Viktor I. and Mart{\'i}nez-Finkelshtein, Andrei and Suetin, Sergey P.},
  title   = {Pad{\'e} approximants, continued fractions, and orthogonal polynomials},
  journal = {Russian Mathematical Surveys},
  year    = {2011},
  volume  = {66},
  number  = {6},
  pages   = {1049--1131},
}

@inproceedings{RosierMaoZmijarevicLealSanchez2022PhysicalPT,
  author    = {Rosier, Emeline and Mao, Li and Zmijarevic, Igor and Leal, Luiz and Sanchez, Richard},
  title     = {Comparison of Two Subgroup Methods Based on the Physical Probability Tables},
  booktitle = {Proceedings of the International Conference on Physics of Reactors ({PHYSOR} 2022)},
  year      = {2022},
  pages     = {2561--2569},
  publisher = {American Nuclear Society},
}

@article{YinZhangLiuHeZhang2025VITAS,
  author  = {Yin, Han and Zhang, Qian and Liu, Xiaojing and He, Hui and Zhang, Tengfei},
  title   = {Study on the subgroup method for the resonance self-shielding calculations in {VITAS}},
  journal = {Nuclear Engineering and Design},
  year    = {2025},
  volume  = {432},
  pages   = {113801},
}

@article{LiZhangZhangZhao2020FineGroupSlowingDown,
  author  = {Li, Song and Zhang, Zhijian and Zhang, Qian and Zhao, Qiang},
  title   = {Improvements of subgroup method based on fine group slowing-down calculation for resonance self-shielding treatment},
  journal = {Annals of Nuclear Energy},
  year    = {2020},
  volume  = {136},
  pages   = {106992},
}

@article{LiuHeWenZuCaoWu2018PRNSMNECPX,
  author  = {Liu, Zhouyu and He, Qingming and Wen, Xingjian and Zu, Tiejun and Cao, Liangzhi and Wu, Hongchun},
  title   = {Improvement and optimization of the pseudo-resonant-nuclide subgroup method in {NECP-X}},
  journal = {Progress in Nuclear Energy},
  year    = {2018},
  volume  = {103},
  pages   = {60--73},
}

@article{LiuHeZuCaoWuZhang2018PRNSMGlobalLocal,
  author  = {Liu, Zhouyu and He, Qingming and Zu, Tiejun and Cao, Liangzhi and Wu, Hongchun and Zhang, Qian},
  title   = {The pseudo-resonant-nuclide subgroup method based global--local self-shielding calculation scheme},
  journal = {Journal of Nuclear Science and Technology},
  year    = {2018},
  volume  = {55},
  number  = {2},
  pages   = {217--228},
}

@article{Levitt1972ProbabilityTable,
  author  = {Levitt, Leo B.},
  title   = {The Probability Table Method for Treating Unresolved Neutron Resonances in Monte Carlo Calculations},
  journal = {Nuclear Science and Engineering},
  year    = {1972},
  volume  = {49},
  number  = {4},
  pages   = {450--457},
}

@article{StimpsonLiuCollinsClarno2017MPACT,
  author  = {Stimpson, Shane and Liu, Yuxuan and Collins, Benjamin and Clarno, Kevin},
  title   = {A lumped parameter method of characteristics approach and multigroup kernels applied to the subgroup self-shielding calculation in {MPACT}},
  journal = {Nuclear Engineering and Technology},
  year    = {2017},
  volume  = {49},
  number  = {6},
  pages   = {1240--1249},
}

@article{Hebert2009SubgroupProjection,
  author  = {H{\'e}bert, Alain},
  title   = {Development of the Subgroup Projection Method for Resonance Self-Shielding Calculations},
  journal = {Nuclear Science and Engineering},
  year    = {2009},
  volume  = {162},
  number  = {1},
  pages   = {56--75},
}

@article{Hebert2005RibonExtended,
  author  = {H{\'e}bert, Alain},
  title   = {The Ribon Extended Self-Shielding Model},
  journal = {Nuclear Science and Engineering},
  year    = {2005},
  volume  = {151},
  number  = {1},
  pages   = {1--24},
}

@book{BakerGravesMorris1996,
  author    = {Baker, George A. and Graves-Morris, Peter},
  title     = {Pad{\'e} Approximants},
  edition   = {2},
  series    = {Encyclopedia of Mathematics and its Applications},
  volume    = {59},
  publisher = {Cambridge University Press},
  address   = {Cambridge},
  year      = {1996},
  isbn      = {0521450071},
}

@article{Hopkins1982,
  author  = {Hopkins, T. R.},
  title   = {On the sensitivity of the coefficients of Pad{\'e} approximants with respect to their defining power series coefficients},
  journal = {Journal of Computational and Applied Mathematics},
  volume  = {8},
  number  = {2},
  pages   = {105--109},
  year    = {1982},
}

@article{GonnetGuettelTrefethen2013,
  author  = {Gonnet, Pedro and G{\"u}ttel, Stefan and Trefethen, Lloyd N.},
  title   = {Robust Pad{\'e} Approximation via {SVD}},
  journal = {SIAM Review},
  year    = {2013},
  volume  = {55},
  number  = {1},
  pages   = {101--117},
}

@article{Bai2002,
  author  = {Bai, Zhaojun},
  title   = {Krylov subspace techniques for reduced-order modeling of large-scale dynamical systems},
  journal = {Applied Numerical Mathematics},
  volume  = {43},
  number  = {1--2},
  pages   = {9--44},
  year    = {2002},
}

@article{GanderKarp2001,
  author  = {Gander, Martin J. and Karp, Alan H.},
  title   = {Stable computation of high order {G}auss quadrature rules using discretization for measures in radiation transfer},
  journal = {Journal of Quantitative Spectroscopy and Radiative Transfer},
  volume  = {68},
  number  = {2},
  pages   = {213--223},
  year    = {2001},
}

@article{Allen1974PadeGauss,
  author  = {Allen, G. D. and Chui, C. K. and Madych, W. R. and Narcowich, F. J. and Smith, P. W.},
  title   = {Pad{\'e} approximation and gaussian quadrature},
  journal = {Bulletin of the Australian Mathematical Society},
  volume  = {11},
  number  = {1},
  pages   = {63--69},
  year    = {1974},
}

@article{MeurantStrakos2006,
  author  = {Meurant, G{\'e}rard and Strako{\v s}, Zden{\v e}k},
  title   = {The {L}anczos and conjugate gradient algorithms in finite precision arithmetic},
  journal = {Acta Numerica},
  volume  = {15},
  pages   = {471--542},
  year    = {2006},
}

@article{Simon1984,
  author  = {Simon, Horst D.},
  title   = {The {L}anczos algorithm with partial reorthogonalization},
  journal = {Mathematics of Computation},
  volume  = {42},
  number  = {165},
  pages   = {115--142},
  year    = {1984},
}

@article{ParlettScott1979,
  author  = {Parlett, B. N. and Scott, D. S.},
  title   = {The {L}anczos algorithm with selective orthogonalization},
  journal = {Mathematics of Computation},
  volume  = {33},
  number  = {145},
  pages   = {217--238},
  year    = {1979},
}

@article{GolubWelsch1969,
  author  = {Golub, Gene H. and Welsch, John H.},
  title   = {Calculation of {G}auss Quadrature Rules},
  journal = {Mathematics of Computation},
  volume  = {23},
  number  = {106},
  pages   = {221--230},
  year    = {1969},
}

@book{GolubMeurant2010,
  author    = {Golub, Gene H. and Meurant, G{\'e}rard},
  title     = {Matrices, Moments and Quadrature with Applications},
  publisher = {Princeton University Press},
  address   = {Princeton, NJ},
  year      = {2010},
}

@book{Meurant2006LanczosCG,
  author    = {Meurant, G{\'e}rard},
  title     = {The {L}anczos and Conjugate Gradient Algorithms: From Theory to Finite Precision Computations},
  publisher = {Society for Industrial and Applied Mathematics (SIAM)},
  address   = {Philadelphia, PA},
  year      = {2006},
}

@book{GolubVanLoan2013,
  author    = {Golub, Gene H. and Van Loan, Charles F.},
  title     = {Matrix Computations},
  edition   = {4},
  publisher = {Johns Hopkins University Press},
  address   = {Baltimore, MD},
  year      = {2013},
}

@article{ChibaUnesaki2006,
  author  = {Chiba, Go and Unesaki, Hiroshi},
  title   = {Improvement of moment-based probability table for resonance self-shielding calculation},
  journal = {Annals of Nuclear Energy},
  volume  = {33},
  number  = {13},
  pages   = {1141--1146},
  year    = {2006},
}

@inproceedings{RibonMaillard1986,
  author    = {Ribon, P. and Maillard, J. M.},
  title     = {Probability Tables and {Gauss} Quadrature: Application to Neutron Cross-Sections in the Unresolved Energy Range},
  booktitle = {Meeting on Advances in Reactor Physics and Safety},
  address   = {Saratoga Springs, NY, USA},
  month     = sep,
  year      = {1986},
  note      = {Communication presented at the meeting; report number CEA-CONF-8720},
}
	
\end{document}